\documentclass[final,3p,times]{elsarticle}

\usepackage{amssymb}
\usepackage{amsmath}
\usepackage{amsthm}
\newtheorem{proposition}{Proposition}
\newtheorem{corollary}{Corollary}
\newtheorem{theorem}{Theorem}[section]
\usepackage{microtype}
\usepackage{booktabs}
\usepackage{lineno}
\usepackage{graphicx}
\graphicspath{{../figures/}{./}}
\usepackage[hypertexnames=false]{hyperref}

\journal{TBD}

\begin{document}

\begin{frontmatter}

\title{DiLLSUE: a differentiable GPU solver for link-based logit stochastic user equilibrium}
\author[1]{Yue Li}
\author[1]{Shujuan Chen\corref{cor1}}
\ead{sc2343@cam.ac.uk}
\author[1]{Ying Jin}

\cortext[cor1]{Corresponding author.}

\affiliation[1]{organization={Martin Centre for Architectural and Urban Studies,
                University of Cambridge},
            addressline={1--5 Scroope Terrace},
            city={Cambridge},
            postcode={CB2 1PX},
            country={United Kingdom}}

\begin{abstract}
Logit stochastic user equilibrium (SUE) captures imperfect route cost
perceptions, but its practical solvers rely on route enumeration, which
becomes prohibitive on large networks, or on link-based heuristics without
convergence guarantees.
We develop DiLLSUE, a differentiable solver for the link-based logit SUE that
requires no route enumeration, no training data, and no network-specific
tuning.
Its inner loading algorithm batches all destinations into fixed-shape tensor
operations, yielding, to our knowledge, the first GPU implementation of
link-based logit SUE.
Four outer solvers --- successive averages, self-regulating averaging,
Anderson mixing, and implicit-function-theorem Newton --- are benchmarked
under identical settings, and a family of acyclicity filters provides fast
approximations with a quantified speed--accuracy trade-off.
On five standard benchmark networks, GPU and CPU executions agree to within
$10^{-4}$\% mean absolute percentage error, and the computed equilibrium
converges monotonically to the independently computed Wardrop equilibrium,
reaching 0.33\% error on Sioux Falls, where the zero-training solver is more
accurate than published trained surrogates.
\end{abstract}

\begin{keyword}
traffic assignment \sep stochastic user equilibrium \sep recursive logit \sep
GPU acceleration \sep destination batching \sep benchmark validation
\end{keyword}

\end{frontmatter}

\section{Introduction}
\label{sec:intro}
Reliable traffic models are central to transport planning, providing the
quantitative basis for infrastructure investment decisions, network design,
congestion pricing, and the assessment of air-quality, emissions, and public
health impacts~\cite{IPCC2022_Mitigation,airquality,emission,%
trafficvolumedecisionmaking,chen_parta}.
Transport planning outcomes depend directly on how traffic distributes across
a network in response to changes in travel demand and network infrastructure
supply.
Traffic assignment is the procedure that determines this distribution.
It finds a network equilibrium in which the distribution of travellers
across routes is consistent with their route choice behaviour.
Infrastructure appraisal, land-use impact assessment, congestion pricing
evaluation, and long-term demand forecasting all require evaluating how link
flows change under alternative demand conditions and network configurations.
In practice, no single scenario characterises the planning problem.
Sensitivity analyses, stochastic demand modelling, and policy portfolio
evaluation each require the assignment problem to be solved many times, and a
comprehensive uncertainty assessment can demand hundreds to thousands of
individual evaluations.
On networks representative of real planning models, iterative equilibrium
solvers require minutes to hours of computation per scenario, and a large-scale
evaluation task can consume days to weeks of aggregate computing
time~\cite{YedavalliEtAl2022,HuXie2025}.
This computational burden limits both the scope of policy alternatives that
practitioners can test and the resolution at which planning uncertainty can be
assessed.

Traffic equilibrium models differ in how they represent route choice behaviour.
Wardrop user equilibrium (UE) assumes that all travellers have perfect route
information and select the fastest available route, with flows concentrating
on minimum-cost paths until no traveller can improve their travel time
unilaterally~\cite{Wardrop1952,Beckmann1956}.
This assumption is appropriate in intra-urban contexts where frequent commuters
develop accurate cost perceptions through repeated
experience~\cite{Sheffi1985}.
Logit stochastic user equilibrium (SUE) replaces perfect information with
multinomial logit route choice~\cite{Dial1971,BenAkiva1985}, capturing the
heterogeneous and imperfect cost perceptions observed in practice, particularly
on inter-urban corridors where route familiarity and information levels are
lower~\cite{DaganzoSheffi1977,Fisk1980}.
SUE is behaviourally richer, but existing solvers rely on explicit route
enumeration, which becomes computationally prohibitive on large
networks~\cite{Prashker2004,Sheffi1985}.

Two approaches have been developed to address logit SUE in practice.
The first uses path-based solvers operating on explicit route sets.
Column generation builds the route set iteratively by adding improving
shortest-path columns until the assignment converges to the exact Fisk
logit equilibrium~\cite{Fisk1980,Sheffi1985}. It is the current exact
path-based standard but requires repeated all-pairs shortest-path computation
whose cost scales with network size and iteration count.
Top-$K$ path selection and Monte Carlo route sampling reduce this cost at the
price of approximation error relative to the full column-generation
solution~\cite{Prashker2004}.
All these strategies maintain OD-pair-specific route sets of variable length,
forming a structural barrier to simultaneous evaluation across destinations.
The second trains neural networks to approximate equilibrium flow patterns.
Graph attention networks (GAT) trained on thousands of pre-solved OD scenarios
can predict link flows with competitive accuracy and reduced inference
cost~\cite{HuXie2025,VelickovicEtAl2018}.
Earlier graph-based and implicit learning approaches have extended this
direction~\cite{LiuMeidani2024,LiuEtAl2023,LiuYin2025,LiEtAl2026a,LiEtAl2026b}.

Both approaches face fundamental limitations.
Path-based logit SUE solvers face a structural scaling barrier.
Column generation requires an all-pairs shortest-path solve per outer
iteration. On networks with large OD pair counts, a single iteration can take
hours of computation.
Route sampling and top-$K$ selection reduce this cost but introduce
approximation error that is difficult to bound without the full
column-generation solution as a reference.
In either case, the path-set structure imposes a ceiling that is intrinsic to
the route-enumeration formulation.
Surrogate approaches face a different set of constraints.
Training data must be generated by the solver the surrogate is designed to
replace, and a separate model must be fitted for each network before any
inference cost savings are realised. When fewer scenarios are needed than the
training budget, the reference solver is cheaper overall.
Replacing a mechanistic glass-box equilibrium model with a neural network also
loses causal transparency, as the relationship between demand inputs and predicted
flows is absorbed into network weights, making the model difficult to audit,
interpret, or extend to policy scenarios outside the training distribution.
GAT surrogates also might not enforce flow conservation at intermediate
nodes, violating a fundamental continuity constraint of traffic
assignment~\cite{HuXie2025}.

Link-based logit SUE avoid route enumeration entirely, reducing the complexity of logit SUE substantially to tractable range.
Rather than distributing demand over an explicitly generated route set, they
model route choice as a sequence of link-level decisions at successive
nodes, loading demand directly onto links so that no route is ever stored.
The dominant practical tool, Dial's STOCH~\cite{Dial1971}, approximates logit
assignment through link efficiency weights without explicit route storage,
but provides no convergence guarantee and no gap measure.
Akamatsu~\cite{Akamatsu1996} formulated
logit assignment on the full cyclic network as an absorbing Markov chain,
Baillon and Cominetti~\cite{BaillonCominetti2008} established existence and
uniqueness of the resulting Markovian traffic equilibrium, recursive
logit~\cite{FosgerauEtAl2013} gave the structure modern random utility
foundations, and network generalized extreme value (NGEV) extensions
followed~\cite{OyamaEtAl2022}.

Three decades on, however, this rigorous line has not become a practical
solver, for the reasons that this paper addresses in turn.
The first is computation. The equilibrium carries a value function and a
flow state for every destination, so implementations tracking
OD-specific flows scale quadratically with the number of
zones in memory and per-iteration work (terabytes of state on a
metropolitan network) while even destination-wise implementations sweep
one destination at a time on CPU.
Applications have therefore remained small-scale case
studies~\cite{Akamatsu1996,BaillonCominetti2008,OyamaEtAl2022}.
The second is the outer solver. Each study adopts a single fixed-point scheme, and
no comparison establishes which methods converge, over which dispersion
range, at what cost per iteration.
The third is well-posedness. The exact cyclic model can fail to admit a
fixed point and can assign flow to circular walks, and the acyclic
restrictions that repair this carry accuracy costs that have never been
quantified.
Finally, no computed equilibrium in this line has been verified externally.
Every study checks only its own convergence residual, never the solution
itself against an independent implementation or against an established
equilibrium computed by a different method.

This study makes three contributions, organised in
Figure~\ref{fig:framework}.
First, we develop DiLLSUE (differentiable link-based logit SUE), a solver
for the exact link-based logit SUE.
The inner algorithm core is destination batching, which reduces the flow
state from $O(|\mathcal{N}|\,|D|^2)$ to $O(|\mathcal{N}|\,|D|)$ (from
terabytes to hundreds of megabytes on metropolitan networks) and
collapses $|D|$ sequential Bellman sweeps into one tensor kernel per pass,
yielding the first GPU implementation of link-based logit SUE.
Second, we build four outer solvers for the equilibrium fixed
point. Successive averages~\cite{SheffiPowell1982}, self-regulating
averaging~\cite{LiuEtAl2009}, Anderson mixing~\cite{Walker2011}, and
implicit-function-theorem Newton.
Third, we develop fast approximations of DiLLSUE through a family of
acyclicity filters and quantify the trade-off between speed and accuracy
against the exact full-graph equilibrium.
We validated our algorithms on five standard benchmark networks,
comparing accuracy, computational cost, and convergence state under identical
settings.

\begin{figure}[t]
\centering
\includegraphics[width=\textwidth]{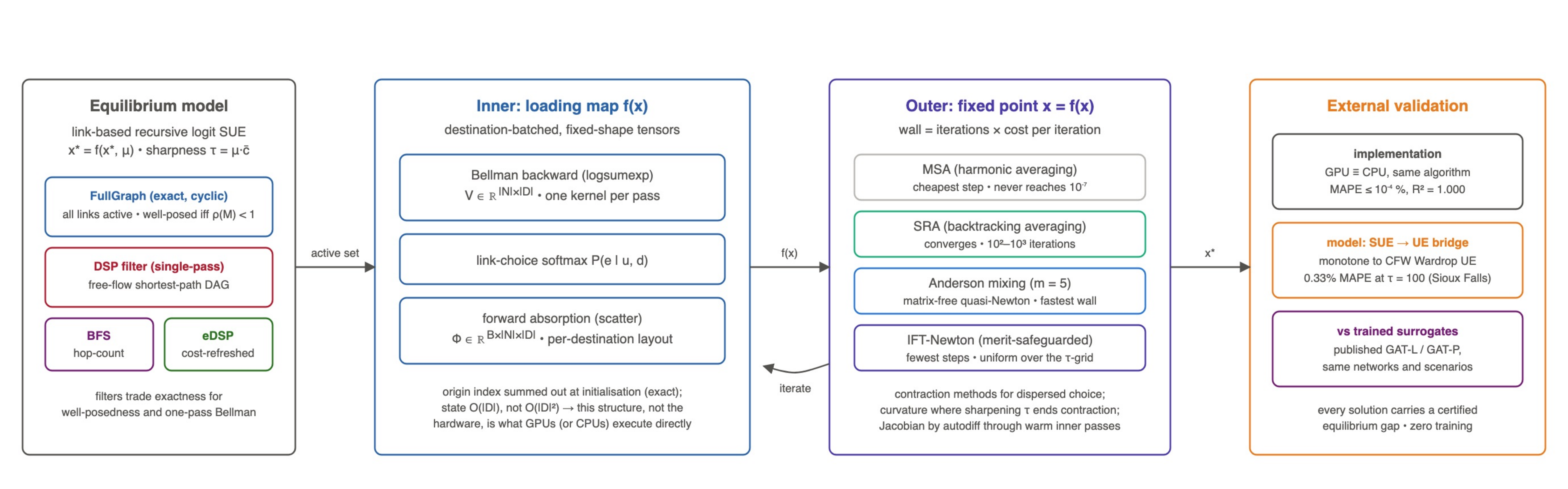}
\caption{The DiLLSUE framework: the paper's three contributions and
their validation.
An equilibrium model layer selects the exact cyclic full-graph model or an
acyclicity filter (DSP, BFS, eDSP), which trade exactness for
well-posedness and a single-pass Bellman.
The inner layer evaluates the loading map $\mathbf{f}(\mathbf{x})$ with all
destinations batched into fixed-shape tensors. This algorithmic structure,
not the hardware, is what allows execution as a small number of accelerator
kernels on GPU or CPU.
The outer layer solves the fixed point $\mathbf{x} = \mathbf{f}(\mathbf{x})$
and is where wall time decomposes into iterations times cost per iteration.
Four solvers are evaluated on the identical loading map.
The validation layer checks the result externally at three levels:
implementation (GPU against CPU), model (monotone convergence to the
independently computed Wardrop UE), and practice (accuracy against published
trained surrogates on identical inputs).}
\label{fig:framework}
\end{figure}

\section{Methods}
\label{sec:methods}

\subsection{Problem statement}
\label{sec:notation}
Let $\mathcal{G} = (\mathcal{N}, \mathcal{E})$ be a directed network with
node set $\mathcal{N}$ and link set $\mathcal{E}$.
Each link $e = (u,v) \in \mathcal{E}$ has free-flow travel time $t^0_e > 0$
and capacity $c_e > 0$.
Travel time on link $e$ depends on flow $x_e$ through the Bureau of Public
Roads (BPR) cost function~\cite{BPR1964}:
\begin{equation}
t_e(x_e) = t^0_e \!\left[1 + a_e\!\left(\frac{x_e}{c_e}\right)^{\!b_e}\right],
\label{eq:bpr}
\end{equation}
where $a_e > 0$ and $b_e > 0$ are link-specific BPR parameters.
The OD demand matrix $\mathbf{Q} = \{q_{od}\}$ gives the vehicle flow per hour
from origin $o$ to destination $d$.
The set of origins is $O \subseteq \mathcal{N}$, the set of destinations is
$D \subseteq \mathcal{N}$ with cardinality $|D|$, and $\mathcal{OD} \subseteq O \times D$
denotes the set of non-zero OD pairs with cardinality $|\mathcal{OD}|$.
The link flow vector $\mathbf{x} = (x_e)_{e \in \mathcal{E}}$ is the primary
output of traffic assignment.

Two equilibrium concepts are used throughout, illustrated in
Figure~\ref{fig:ue_sue}.
Wardrop user equilibrium (UE) requires that no traveller can reduce travel time
by unilaterally switching routes~\cite{Wardrop1952,Beckmann1956}:
\begin{equation}
\mathbf{x}^* = \mathbf{f}_{\mathrm{UE}}(\mathbf{x}^*),
\label{eq:ue}
\end{equation}
where $\mathbf{f}_{\mathrm{UE}}$ is an all-or-nothing loading map that assigns
all demand to minimum-cost paths under costs $t_e(\mathbf{x}^*)$.

Logit stochastic user equilibrium (SUE) under the recursive logit
model~\cite{FosgerauEtAl2013} allows travellers to make imperfect route choices
with heterogeneity governed by the dispersion parameter $\mu > 0$.
A value function $V(n,d)$ encodes the expected cost from node $n$ to destination
$d$ under logit route choice. The logit SUE fixed point is
\begin{equation}
\mathbf{x}^* = \mathbf{f}_{\mathrm{SUE}}(\mathbf{x}^*,\,\mu),
\label{eq:sue}
\end{equation}
where $\mathbf{f}_{\mathrm{SUE}}$ assigns demand to links via recursive logit
probabilities derived from $V(n,d)$ at costs $t_e(\mathbf{x}^*)$.
As $\mu \to +\infty$, logit route choice concentrates on minimum-cost paths
and logit SUE converges to Wardrop UE.
At small $\mu$, demand spreads more uniformly across routes.

\begin{figure}[t]
\centering
\includegraphics[width=\textwidth]{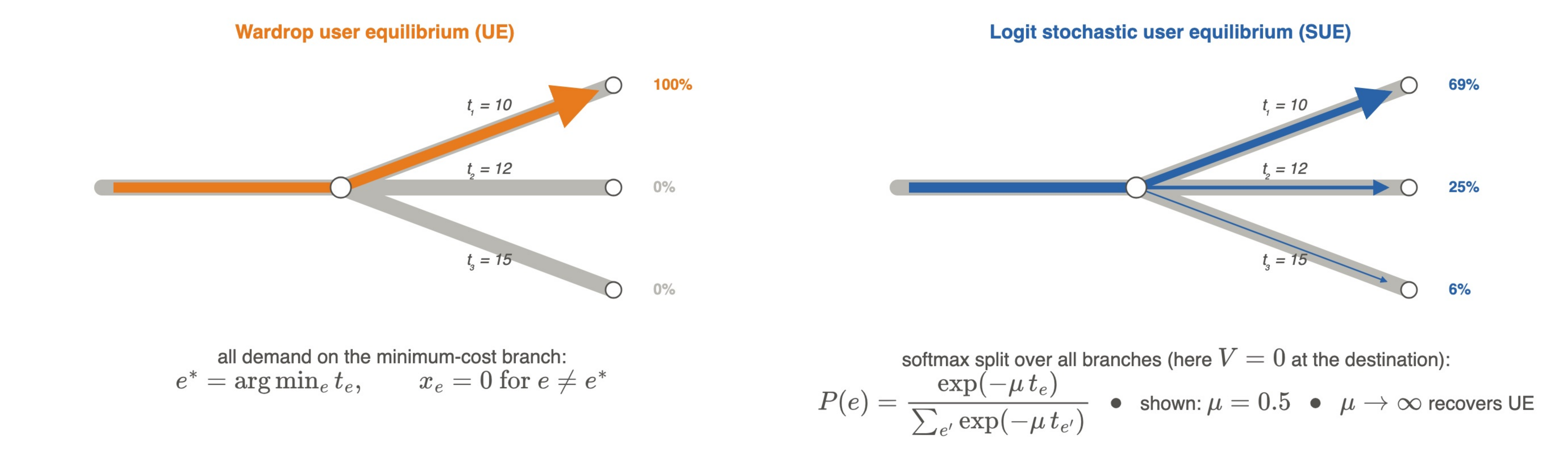}
\caption{Wardrop UE and logit SUE at a junction with three downstream
branches to the destination (costs 10, 12, 15).
UE assigns all demand to the minimum-cost branch.
Logit SUE splits demand across all branches by the softmax
$P(e) = \exp(-\mu t_e) / \sum_{e'} \exp(-\mu t_{e'})$, shown at
$\mu = 0.5$ (the value function $V$ vanishes here because every branch ends
at the destination).
As $\mu \to \infty$ the split concentrates on the minimum-cost branch and
SUE recovers UE.}
\label{fig:ue_sue}
\end{figure}

\subsection{DiLLSUE Formulation}
\label{sec:dita}

Following differentiable traffic assignment (DiTA)~\cite{NieLi2026}, which
recasts the iterative assignment process as a differentiable computation
graph executable by modern tensor libraries, DiLLSUE separates the solution
of equation~\eqref{eq:sue} into two components.
A loading map $\mathbf{f}$ evaluates the recursive logit assignment at the
current costs (Section~\ref{sec:loading}).
An outer solver drives the flows to the fixed point
$\mathbf{x}^* = \mathbf{f}(\mathbf{x}^*)$
(Section~\ref{sec:outer}), where $\mathbf{f}(\mathbf{x}_\ell)$ denotes
loading at costs $t_e(\mathbf{x}_\ell)$ and $\ell = 0, 1, 2, \ldots$
indexes the outer iteration.

\subsubsection{Recursive logit loading map}
\label{sec:loading}
In DiLLSUE, the loading map $\mathbf{f}$ is a logit assignment computed by
recursive logit~\cite{FosgerauEtAl2013}.
For each destination $d \in D$, a backward value iteration computes the
value function $V(n,d)$ satisfying the Bellman equation over all outgoing
links at each node:
\begin{equation}
V(n,\,d) = \log \sum_{e=(n,v) \in \mathcal{E}}
\exp\,\bigl[-\mu\,t_e(\mathbf{x}) + V(v,\,d)\bigr],
\label{eq:bellman}
\end{equation}
with $V(d,d) = 0$.
Under positive link costs and $\mu > 0$, the value iteration converges to a
unique fixed point~\cite{FosgerauEtAl2013}. It is run for $L_{\mathrm{val}}$
passes.
Figure~\ref{fig:bellman} traces these sweeps on a toy network.
The logit choice probability for link $e = (u,v)$ departing node $u$ toward
destination $d$ is:
\begin{equation}
P(e \mid u,\,d) = \exp\,\bigl[-\mu\,t_e(\mathbf{x}) + V(v,\,d) - V(u,\,d)\bigr].
\label{eq:logit}
\end{equation}
A forward absorption pass then propagates OD demand across the network:
\begin{equation}
\phi^d_n = \sum_{o \in O} q_{od}\,\mathbf{1}_{[n=o]}
         + \sum_{e=(u,n) \in \mathcal{E}}
           P(e \mid u,\,d)\;\phi^d_u,
\qquad n \neq d,
\label{eq:absorption}
\end{equation}
where $\phi^d_n$ is the flow mass at node $n$ en route to destination $d$,
initialised to $q_{od}$ at each origin $o$.
The loading map output is
$f_e(\mathbf{x}) = \sum_{d \in D} P(e \mid u, d)\;\phi^d_u$
for each link $e = (u,v)$.

\begin{figure}[t]
\centering
\includegraphics[width=\textwidth]{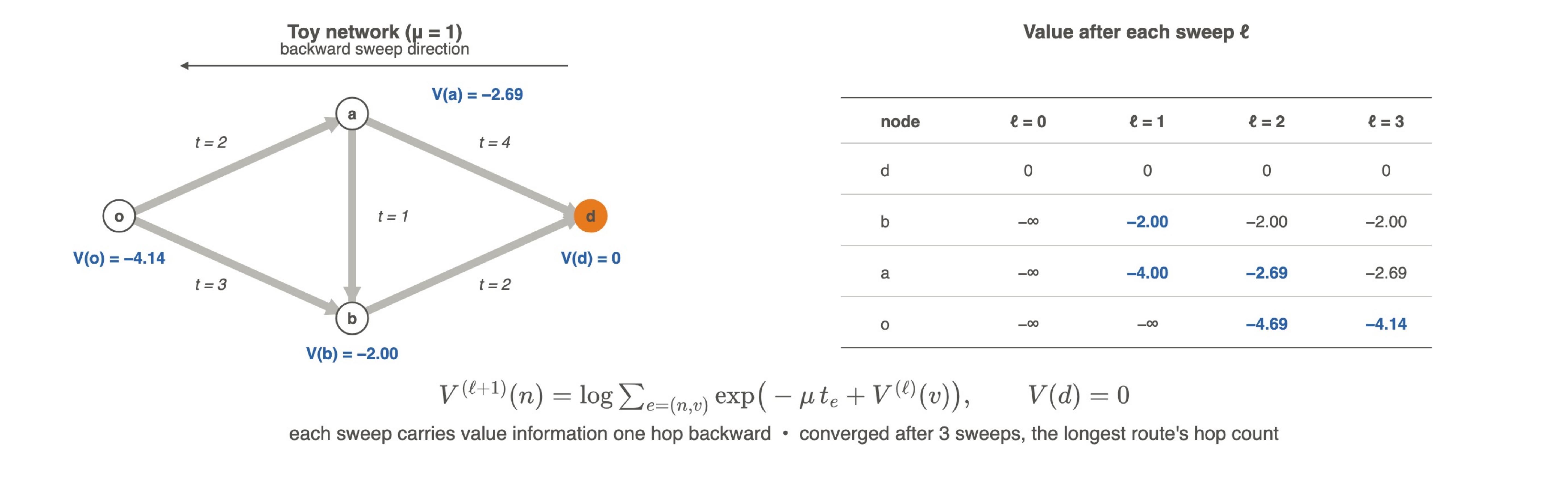}
\caption{The backward Bellman sweep on a toy network with a single
destination $d$ ($\mu = 1$, link costs as shown).
Each synchronous sweep applies equation~\eqref{eq:bellman} once and carries
value information one hop backward from the destination.
After sweep 1 only nodes one hop from $d$ hold finite values.
Node $a$ improves at sweep 2, when the cheaper two-hop route through $b$
becomes visible, and the origin converges at sweep 3, the hop count of the
longest route.
$L_{\mathrm{val}}$ sweeps therefore suffice once $L_{\mathrm{val}}$ covers
the network's hop diameter.}
\label{fig:bellman}
\end{figure}

\subsubsection{Outer solver}
\label{sec:outer}
Write $\mathbf{r}_\ell = \mathbf{f}(\mathbf{x}_\ell) - \mathbf{x}_\ell$
for the fixed-point residual at outer iteration $\ell$, so that the
equilibrium satisfies $\mathbf{r}(\mathbf{x}^*) = \mathbf{0}$.
Four outer solvers are built on the identical loading map.
They differ only in how the next iterate is formed from the residual, and
they span the trade-off between many cheap steps and few expensive ones.

\paragraph{Method of successive averages}
The textbook baseline~\cite{SheffiPowell1982} averages each loading into
the current flows with the harmonic step $\alpha_\ell = 1/(\ell+1)$,
starting from $\mathbf{x}_0 = \mathbf{0}$:
\begin{equation}
\mathbf{x}_{\ell+1}
  = \mathbf{x}_\ell + \alpha_\ell\,\mathbf{r}_\ell
  = \left(1 - \alpha_\ell\right)\mathbf{x}_\ell
    + \alpha_\ell\,\mathbf{f}(\mathbf{x}_\ell).
\label{eq:msa}
\end{equation}
The harmonic step satisfies the two Robbins-Monro summability conditions,
$\sum_{\ell} \alpha_\ell = \infty$ and
$\sum_{\ell} \alpha_\ell^2 < \infty$, which guarantee convergence to
the fixed point for the SUE loading
map~\cite{RobbinsMonro1951,SheffiPowell1982,Cantarella1997}.
The guarantee is bought with speed: the step shrinks on schedule whether or
not progress is being made.

\paragraph{Self-regulating averaging}
Self-regulating averaging (SRA)~\cite{LiuEtAl2009} keeps the averaging
direction $\mathbf{r}_\ell$ but sizes the step by backtracking on the
squared-residual merit
$W(\mathbf{x}) = \|\mathbf{f}(\mathbf{x}) - \mathbf{x}\|_2^2$:
starting from $\alpha = 1$, the step is halved until
$W(\mathbf{x}_\ell + \alpha\,\mathbf{r}_\ell) < W(\mathbf{x}_\ell)$,
and the harmonic step $\alpha_\ell = 1/\ell$ is taken when no halving
succeeds within eight attempts.
The fallback preserves the Robbins-Monro guarantee, while accepted
backtracking steps are far larger than harmonic steps away from the
equilibrium.

\paragraph{Anderson mixing}
Anderson acceleration~\cite{Walker2011} extrapolates the next iterate from
a short history of the $m_\ell = \min(m, \ell)$ most recent steps.
Collect the recent differences as columns of
$\Delta X_\ell = [\,\mathbf{x}_{\ell-m_\ell+1}-\mathbf{x}_{\ell-m_\ell},
\ldots, \mathbf{x}_{\ell}-\mathbf{x}_{\ell-1}\,]$ and
$\Delta R_\ell = [\,\mathbf{r}_{\ell-m_\ell+1}-\mathbf{r}_{\ell-m_\ell},
\ldots, \mathbf{r}_{\ell}-\mathbf{r}_{\ell-1}\,]$, solve the small
least-squares problem
\begin{equation}
\boldsymbol{\gamma}_\ell
  = \arg\min_{\boldsymbol{\gamma} \in \mathbb{R}^{m_\ell}}
    \bigl\| \mathbf{r}_\ell - \Delta R_\ell\,\boldsymbol{\gamma} \bigr\|_2,
\label{eq:anderson_ls}
\end{equation}
and form the type-II update
\begin{equation}
\mathbf{x}_{\ell+1}
  = \mathbf{x}_\ell + \mathbf{r}_\ell
    - \left(\Delta X_\ell + \Delta R_\ell\right)\boldsymbol{\gamma}_\ell.
\label{eq:anderson}
\end{equation}
Equation~\eqref{eq:anderson} is a multisecant quasi-Newton step: it
approximates the Newton direction from the residual history alone, at the
cost of one $m_\ell$-dimensional least-squares solve per iteration and with
no Jacobian evaluations.
The candidate is accepted only when it reduces the merit $W$, and the solver
falls back to the SRA step otherwise, so the safeguarded iteration reaches
the same fixed point and is never worse than its fallback.
The production setting uses window $m = 5$ with Tikhonov-regularised normal
equations.
To our knowledge this is the first application of Anderson mixing to traffic
assignment.
A convergence certificate for a variant safeguarded by sufficient decrease
of the entropy-regularised SUE objective is given in Supplementary
Section~S8, together with the gap between that variant and the cheaper
residual-based safeguard used in production.

\paragraph{Implicit-function-theorem Newton}
The production solver applies Newton's method to the residual equation
$\mathbf{r}(\mathbf{x}) = \mathbf{0}$.
With $J_\ell = \partial \mathbf{f} / \partial \mathbf{x}$ evaluated at
$\mathbf{x}_\ell$, each step solves the linear system
\begin{equation}
\left(I - J_\ell\right)\boldsymbol{\delta}_\ell = \mathbf{r}_\ell,
\qquad
\mathbf{x}_{\ell+1} = \mathbf{x}_\ell + \boldsymbol{\delta}_\ell.
\label{eq:newton}
\end{equation}
Two obstacles make the textbook form impractical here.
The Jacobian is a dense $|\mathcal{E}| \times |\mathcal{E}|$ matrix that
cannot be formed, and $\mathbf{f}$ is itself defined through the inner
Bellman fixed point of equation~\eqref{eq:bellman}.
The first is resolved by solving equation~\eqref{eq:newton} matrix-free
with restarted GMRES, which is valid for the asymmetric operator
$I - J_\ell$ and requires only Jacobian--vector products.
The second is resolved through the implicit-function theorem: because the
inner value function is converged before each step, the products
$J_\ell \mathbf{v}$ are evaluated by forward-mode automatic
differentiation through a fixed number of Bellman passes initialised at the
warm-converged value function, which differentiates the converged inner
operator without storing or unrolling the inner iteration history.
This construction is what the differentiable implementation supplies
natively, and it is new to this setting.
A short SRA phase warm-starts the flows into Newton's basin of attraction,
and every Newton step is safeguarded by a merit test on $W$ with an MSA
fallback.

All four solvers run unchanged on GPU and CPU.
Their four-way comparison at identical tolerance is reported in
Supplementary Section~S1.

\subsubsection{Acyclicity filter variants}
\label{sec:filter}
The full-graph Bellman backward requires $L_{\mathrm{val}}$ iterative passes,
each of cost $O(|\mathcal{E}| \times |D|)$, making it the dominant cost per
outer iteration.
An acyclicity filter reduces this to a single pass by restricting the Bellman
sum in equation~\eqref{eq:bellman} to a per-destination DAG, on which
topological order enables exact propagation in one backward sweep.

The destination shortest-path (DSP) filter constructs the active link set as
\begin{equation}
\mathcal{E}^d_{\mathrm{DSP}} = \bigl\{(u,v)\in\mathcal{E} :
  d_{\mathrm{FF}}(v,\,d) < d_{\mathrm{FF}}(u,\,d)\bigr\},
\label{eq:dsp}
\end{equation}
where $d_{\mathrm{FF}}(n,d)$ is the free-flow shortest-path distance from $n$
to $d$. This set forms a DAG per destination (Supplementary Material,
Section~S2).
The filter is computed once from base free-flow costs and reused across all
outer iterations, introducing an approximation gap because equilibrium costs
shift link preferences away from the free-flow shortest-path tree. The gap is
quantified in Section~\ref{sec:results_filter}.

Two acyclicity filter variants are evaluated in the Supplementary Material.
A BFS hop-count filter restricts the active link set using shortest-hop counts
from the free-flow network, offering a coarser but computationally cheaper
acyclicity constraint (Section~S3).
The equilibrium-cost DSP (eDSP) rebuilds the filter at each outer iteration
from the current equilibrium costs, progressively reducing the approximation
gap without requiring a full Bellman backward (Section~S4).

\subsection{Destination-batched GPU computation}
\label{sec:gpu}

The loading map $\mathbf{f}(\mathbf{x})$ at each outer iteration comprises two
operations: the Bellman backward pass, which computes $V(n,d)$
via equation~\eqref{eq:bellman}, and the forward absorption pass, which
propagates OD demand via equation~\eqref{eq:absorption}.
Both operations are independent across destinations and are batched over all
$|D|$ destinations simultaneously on the GPU.

\paragraph{Bellman backward}
The value function $V(n,d)$ depends only on the destination $d$.
All $|D|$ Bellman sweeps are therefore stacked into a single value matrix
$\mathbf{V} \in \mathbb{R}^{|\mathcal{N}| \times |D|}$. Each pass executes
as one scatter-logsumexp kernel call over the
$(|\mathcal{E}| \times |D|)$ link-destination message tensor, replacing $|D|$
sequential sweeps with one GPU kernel.

\paragraph{Forward absorption}
The same destination-indexed structure applies to the forward absorption pass.
Let $\boldsymbol{\Phi} \in \mathbb{R}^{B \times |\mathcal{N}| \times |D|}$
denote the batched per-destination flow tensor, where $B$ is the scenario
batch dimension and $\Phi_{b,n,d}$ is the flow mass at node $n$ en route to
destination $d$ in scenario $b$ (the stacked form of $\phi^d_n$ from
equation~\eqref{eq:absorption}).
An alternative per-OD layout
$\boldsymbol{\Phi}^{\mathrm{OD}} \in
\mathbb{R}^{B \times |\mathcal{N}| \times |\mathcal{OD}|}$
tracks each origin--destination pair separately.
For a dense OD matrix, $|\mathcal{OD}| \approx |D|^2$, so the per-OD layout
scales quadratically in the number of destinations while the per-destination
layout scales linearly. The reduction factor is $|\mathcal{OD}|/|D|$.
The reformulation is exact: the link-choice
probability~\eqref{eq:logit} depends on the destination alone, so
destination-indexed propagation reproduces the per-OD flows by summing demand
over origins at initialisation rather than tracking each origin separately.
Supplementary Section~S9 states this equivalence and the exact
$(|D|-1)$-fold reduction as a proposition with proof.
Table~\ref{tab:layout} summarises the memory and per-pass work for each layout.
On Barcelona ($|D| = 110$, $|\mathcal{OD}| = 7{,}922$), the per-destination
layout reduces tensor state and per-pass scatter work by a factor of $72$
relative to the per-OD layout.
The factor is what makes larger networks representable at all: on a
metropolitan network of the scale of Sydney (33{,}113 nodes, 3{,}264 zones,
TransportationNetworks repository), the per-OD flow tensor for a single
scenario would occupy roughly 1.4\,TB in single precision, beyond any single
accelerator, while the per-destination tensor occupies roughly 432\,MB.

\begin{table}[ht]
\centering
\caption{Memory and per-pass scatter work for the two forward absorption layouts.
         $B$: scenario batch size. $|\mathcal{N}|$: nodes.
         $|\mathcal{E}|$: links. $|D|$: destinations.
         $|\mathcal{OD}|$: OD pairs.
         For a dense OD matrix $|\mathcal{OD}| \approx |D|^2$, making the
         reduction factor $|\mathcal{OD}|/|D| \approx |D|$.}
\label{tab:layout}
\begin{tabular}{lll}
\toprule
Layout & $\boldsymbol{\Phi}$ tensor state
       & Per-pass scatter work \\
\midrule
Per-OD          & $B \times |\mathcal{N}| \times |\mathcal{OD}|$
                & $B \times |\mathcal{E}| \times |\mathcal{OD}|$ \\
Per-destination & $B \times |\mathcal{N}| \times |D|$
                & $B \times |\mathcal{E}| \times |D|$ \\
Reduction       & $|\mathcal{OD}|/|D|$
                & $|\mathcal{OD}|/|D|$ \\
\bottomrule
\end{tabular}
\end{table}

This architecture is enabled by the link-level formulation of recursive logit:
because route choice is encoded through $V(n,d)$ rather than enumerated
explicitly, the value tensor $\mathbf{V}$ has the same fixed shape
$|\mathcal{N}| \times |D|$ regardless of route diversity.
Destination batching is structurally distinct from origin-parallel GPU
approaches in deterministic assignment~\cite{HeywoodEtAl2019} and is
unavailable to path-based SUE solvers, which maintain OD-pair-specific route
sets of variable length and cannot be arranged into fixed-shape tensors.

\subsection{Experimental design and setup}
\label{sec:setup}

\paragraph{Networks}
Five benchmark networks are used, drawn from the open TransportationNetworks
repository~\cite{Stabler2023}, the same five on which Hu and
Xie~\cite{HuXie2025} trained and evaluated their GAT model, enabling direct
comparison on identical inputs.
Table~\ref{tab:networks} summarises the network characteristics.

\begin{table}[ht]
\centering
\caption{Benchmark networks from the TransportationNetworks repository.
         Zones are OD-generating areas (traffic analysis zones).
         OD pairs counts non-zero entries in the base demand matrix.
         Volume-to-capacity ratio ($v/c$) is computed under base demand.
         $\bar{c}$ is the demand-weighted mean OD free-flow shortest-path
         cost, used to normalise the dispersion sweep
         (equation~\eqref{eq:tau}).}
\label{tab:networks}
\begin{tabular}{lrrrrcc}
\toprule
Network & Nodes & Links & Zones & OD pairs & $v/c^{*}$ & $\bar{c}$ \\
\midrule
Sioux Falls           &    24 &    76 &  24 &    528 & $\approx 1.47$ & 8.81 \\
Eastern Massachusetts &    74 &   258 &  74 &  1{,}113 & $\approx 0.30$ & 0.38 \\
Anaheim               &   416 &   914 &  38 &  1{,}406 & $\approx 0.60$ & 11.17 \\
Barcelona             & 1{,}020 & 2{,}522 & 110 &  7{,}922 & $\approx 0.10$ & 6.50 \\
Winnipeg              & 1{,}052 & 2{,}836 & 147 &  4{,}345 & $\approx 0.10$ & 12.24 \\
\bottomrule
\end{tabular}
\par\smallskip
\footnotesize $^{*}$Barcelona and Winnipeg use uniform unit link capacities in
the TNTP source files. Their $v/c$ values reflect average flow scale relative
to that unit capacity and are not directly comparable with the other networks.
Under the perturbed demand these two networks operate in the saturated regime
of the capped BPR function (see below).
\end{table}

The five networks span link count from 76 to 2{,}836, OD pair count from 528
to 7{,}922, and loading level from lightly loaded to heavily congested
(volume-to-capacity ratio $v/c$ from 0.1 to 1.47).
Sioux Falls is the standard small-network test case and the most congested
network in the benchmark set.
Barcelona and Winnipeg are the two largest networks in the set.
Their TNTP files assign uniform unit link capacities, so the perturbed
demand drives volume-to-capacity ratios far above one.
The ratio $x_e/c_e$ is therefore capped at $100$ before the power in
equation~\eqref{eq:bpr} is applied, a saturated-link modelling assumption
applied identically in every solver in this study.
The cap is inactive on the other three networks.

\paragraph{Dimensionless dispersion sweep}
The behavioural sharpness of logit route choice is governed not by $\mu$
alone but by the product of $\mu$ with the travel costs it multiplies in
equation~\eqref{eq:bellman}.
The same $\mu$ therefore produces very different choice behaviour on networks
whose cost scales differ: mean OD free-flow shortest-path costs span a factor
of about thirty across the five benchmark networks
(Table~\ref{tab:networks}).
A raw $\mu$ grid applied uniformly across networks would compare
near-deterministic assignment on one network with near-uniform spreading on
another at the same nominal grid point.
The sweep is therefore indexed by the dimensionless sharpness
\begin{equation}
\tau = \mu\,\bar{c},
\label{eq:tau}
\end{equation}
where $\bar{c}$ is the demand-weighted mean OD free-flow shortest-path cost of
the network: the typical cost against which a traveller compares alternatives,
computable a priori without solving any equilibrium.
Demand weighting reflects that trip distribution is non-uniform. Free-flow
rather than congested cost avoids circularity with the equilibrium being
solved.
Each network is evaluated at $\tau \in \{1,\,3,\,10,\,30,\,100\}$, covering
strongly dispersed route choice ($\tau = 1$) through the near-deterministic
regime ($\tau = 100$) in which logit SUE approaches Wardrop UE. The
corresponding raw $\mu = \tau/\bar{c}$ per network follows from
Table~\ref{tab:networks}.

\paragraph{Scenario generation}
Two hundred scenarios are generated per network by independently perturbing each
element of the OD demand matrix, link capacities, and free-flow travel times:
\begin{align}
\tilde{q}_{od}  &= \sigma_{od}\,q_{od},
  & \sigma_{od}  &\sim U(0.5,\;3.0),
\label{eq:perturb_q}\\
\tilde{c}_e     &= \sigma_{c,e}\,c_e,
  & \sigma_{c,e} &\sim \mathrm{DU}\{0.5,\;1.0,\;2.0,\;3.0\},
\label{eq:perturb_c}\\
\tilde{t}^0_e   &= \sigma_{t,e}\,t^0_e,
  & \sigma_{t,e} &\sim \mathrm{DU}\{0.8,\;1.0,\;1.2,\;1.4\},
\label{eq:perturb_t}
\end{align}
where $U$ denotes the continuous uniform distribution, $\mathrm{DU}$ denotes
the discrete uniform distribution, and all multipliers are drawn independently
per element (seed 42, first 200 of 2{,}000 scenarios).
This protocol follows Hu and Xie~\cite{HuXie2025} exactly, enabling direct
comparison of performance metrics.

\paragraph{DiLLSUE solvers}
Three DiLLSUE configurations are evaluated.
DiLLSUE-FullGraph uses the full-graph Bellman loading map with the IFT-Newton
outer solver (Section~\ref{sec:dita}).
DiLLSUE-DSP uses the DSP acyclicity filter for a single-pass Bellman at each
outer iteration with the same outer solver.
DiLLSUE-BFS and DiLLSUE-eDSP results are reported in the Supplementary
Material.
All DiLLSUE runs use $L_{\mathrm{val}} = 50$ Bellman passes per outer
iteration, sufficient to cover the maximum network diameter (Winnipeg: 40 hops),
run in double precision, and declare convergence at the L2 fixed-point
relative residual
\begin{equation}
\mathrm{gap}_{\mathrm{rel}} =
  \frac{\|\mathbf{x} - \mathbf{f}(c(\mathbf{x}))\|_2}{\|\mathbf{x}\|_2}
  < 10^{-7}.
\label{eq:gap}
\end{equation}
The threshold $10^{-7}$ is motivated by the basin-of-attraction behaviour of
Barcelona and Winnipeg at looser tolerances (Supplementary Material,
Section~S5).
The cyclic full-graph model admits a fixed point only when the spectral radius
of its exponentiated-utility weight matrix is below one. FullGraph results are
reported for the cells in which this fixed point exists and the solver reached
tolerance for all 200 scenarios, which covers Sioux Falls at all $\tau$,
Eastern Massachusetts at $\tau \geq 10$, and Barcelona and Winnipeg at all
$\tau$ (Section~\ref{sec:limitations} discusses the remaining cells).
On the two saturated unit-capacity networks the converged full-graph flows
contain substantial cyclic-walk traversals (Section~\ref{sec:limitations}),
so comparisons against the path-based Wardrop reference on Barcelona and
Winnipeg are reported with the acyclic DSP variant. The full-graph cells
there enter the implementation-consistency comparison only.
The acyclic DSP variant is well-posed on every network and every $\tau$.

\paragraph{Reference solvers}
Ref-SUE-FullGraph runs the identical IFT-Newton algorithm on CPU.
Comparison~1 reports per-link flow agreement between the GPU and CPU
executions: an implementation-consistency check that hardware, kernel
scheduling, and floating-point reduction order do not change the computed
fixed point, not a measure of flow quality against any external planning
benchmark.
Model correctness is instead established by Comparison~4: as $\tau$ grows the
computed logit SUE must converge to the Wardrop UE computed by an entirely
independent solver family. A defective loading map would not exhibit this
limit.
Wardrop UE reference flows were generated by the conjugate Frank-Wolfe
algorithm~\cite{MitradjievaLindberg2013} at Wardrop relative gap below
$10^{-5}$~\cite{HuXie2025} under the same capped BPR cost.

\paragraph{GAT surrogates}
We use published results from Hu and Xie~\cite{HuXie2025} directly and
do not train new surrogate models.
Hu and Xie trained two GAT variants on Wardrop UE ground truth generated by
CFW: GAT-L (link-based) and GAT-P (path-based).
MAPE and $R^2$ are taken from Table~3 of Hu and Xie~\cite{HuXie2025}.

\paragraph{Comparisons}
Four result sections address the paper's objectives.
Comparison~1 (Section~\ref{sec:results_sue_acc}) reports per-link flow agreement
between GPU and CPU executions of DiLLSUE-FullGraph across the $\tau$ grid:
the implementation-consistency check.
Comparison~2 (Section~\ref{sec:results_sue_gat}) evaluates DiLLSUE against
published GAT-L and GAT-P results from Hu and Xie~\cite{HuXie2025},
both measured against the CFW Wardrop UE reference on identical networks and
scenarios.
Comparison~3 (Section~\ref{sec:results_compute}) reports GPU against CPU
wall-clock time for the same algorithm and convergence criterion.
Comparison~4 (Section~\ref{sec:bridge}) traces the computed SUE toward the
independently computed Wardrop UE across the full $\tau$ sweep: the model
correctness check.
The DSP filter's speed-accuracy trade-off relative to FullGraph is quantified
in Section~\ref{sec:results_filter}.

\begin{table}[ht]
\centering
\caption{Summary of experimental configurations.
         All DiLLSUE and Ref-SUE-FullGraph runs use $L_{\mathrm{val}} = 50$
         inner Bellman passes per outer iteration, double precision, and the
         IFT-Newton outer solver.
         $\mathrm{gap}_{\mathrm{rel}}$: equation~\eqref{eq:gap}.
         Wardrop rel.\ gap: standard CFW criterion~\cite{MitradjievaLindberg2013}.
         GAT hardware and training details follow Hu and Xie~\cite{HuXie2025}.}
\label{tab:config}
\begin{tabular}{lllll}
\toprule
Method & Hardware & Criterion & Threshold & $\tau$ \\
\midrule
DiLLSUE-FullGraph & A100 SXM4 (80\,GB)  & $\mathrm{gap}_{\mathrm{rel}}$
  & $10^{-7}$ & 1--100 \\
DiLLSUE-DSP       & A100 SXM4 (80\,GB)  & $\mathrm{gap}_{\mathrm{rel}}$
  & $10^{-7}$ & 1--100 \\
DiLLSUE-BFS$^{*}$ & A100 SXM4 (80\,GB)  & $\mathrm{gap}_{\mathrm{rel}}$
  & $10^{-7}$ & 10 \\
DiLLSUE-eDSP$^{*}$ & A100 SXM4 (80\,GB) & $\mathrm{gap}_{\mathrm{rel}}$
  & $10^{-7}$ & 1--100 \\
Ref-SUE-FullGraph  & Xeon 8360Y (CPU)    & $\mathrm{gap}_{\mathrm{rel}}$
  & $10^{-7}$ & 1--100 \\
CFW (UE ref.)      & Xeon 8360Y (CPU)    & Wardrop rel.\ gap
  & $10^{-5}$ & Wardrop UE \\
GAT-L / GAT-P      & per~\cite{HuXie2025} & —
  & — & Wardrop UE \\
\bottomrule
\end{tabular}
\par\smallskip\footnotesize
$^{*}$Supplementary Material only.
\end{table}

\paragraph{Metrics}
Two metrics follow Hu and Xie~\cite{HuXie2025}.
Mean absolute percentage error (MAPE) is:
\begin{equation}
\mathrm{MAPE} = \frac{1}{|\mathcal{E}^+|} \sum_{e \in \mathcal{E}^+}
\frac{|\hat{x}_e - x_e|}{x_e} \times 100\%,
\label{eq:mape}
\end{equation}
where $\hat{x}_e$ is the predicted flow, $x_e$ is the reference flow, and
$\mathcal{E}^+ = \{e \in \mathcal{E} : x_e \geq 1\;\mathrm{veh/h}\}$ excludes
near-zero-flow links to avoid denominator inflation.
The coefficient of determination is:
\begin{equation}
R^2 = 1 - \frac{\displaystyle\sum_{e \in \mathcal{E}} (\hat{x}_e - x_e)^2}
               {\displaystyle\sum_{e \in \mathcal{E}} (x_e - \bar{x})^2},
\label{eq:r2}
\end{equation}
where $\bar{x}$ is the mean reference flow across all links.
MAPE is the primary metric. $R^2$ measures the fraction of variance in
reference flows explained by predicted flows.
All metrics are computed per scenario and averaged across the 200 scenarios
per network.

\paragraph{Hardware and software}
GPU runs used an NVIDIA A100 SXM4 (80\,GB) on the CSD3 Ampere partition
(Cambridge HPC).
CPU runs used the CSD3 icelake partition: Intel Xeon Platinum~8360Y
(2$\times$36-core, Ice Lake, 2.4\,GHz), 32~cores per task, 120\,GB RAM.
Software: Python~3.10, PyTorch~2.x, CUDA~12.x (GPU). Python~3.10, NumPy,
SciPy (CPU).

\section{Results}
\label{sec:results}


\subsection{Implementation validity: DiLLSUE-FullGraph vs Ref-SUE-FullGraph}
\label{sec:results_sue_acc}

Table~\ref{tab:dita_sue_acc} reports agreement between the GPU and CPU
executions of DiLLSUE-FullGraph across the $\tau$ grid, 200 scenarios per
cell.
MAPE here measures agreement between two hardware executions of the same
recursive logit SUE algorithm and is an implementation-consistency check, not
a measure of equilibrium concept divergence.

\begin{table}[ht]
\centering
\caption{DiLLSUE-FullGraph GPU--CPU implementation consistency.
         Both executions run the identical IFT-Newton algorithm to
         $\mathrm{gap}_{\mathrm{rel}} < 10^{-7}$. The reported MAPE isolates
         hardware, kernel-scheduling, and floating-point reduction-order
         differences at the converged equilibrium.
         Worst cell = largest per-cell mean MAPE across the $\tau$ grid
         (200 scenarios per cell).}
\label{tab:dita_sue_acc}
\begin{tabular}{lccc}
\toprule
Network               & $\tau$ cells & Worst-cell MAPE\,(\%) & $R^2$ \\
\midrule
Sioux Falls           & 1--100  & $3.6\times10^{-7}$ & 1.000000 \\
Eastern Massachusetts & 10--100 & $1.3\times10^{-7}$ & 1.000000 \\
Barcelona             & 1--100  & $6.1\times10^{-5}$ & 1.000000 \\
Winnipeg              & 1--100  & $3.3\times10^{-7}$ & 1.000000 \\
\bottomrule
\end{tabular}
\par\smallskip
\footnotesize Anaheim and the low-$\tau$ Eastern Massachusetts cells are
outside the reported FullGraph scope (Section~\ref{sec:setup}). Anaheim is
covered by the DSP variant throughout.
\end{table}

Median MAPE is zero in every cell: the two executions produce bitwise-equal
flows for most scenarios, and the worst per-cell mean across eighteen
network-$\tau$ cells is $6.1\times10^{-5}$\,\% (Barcelona, $\tau = 1$).
Agreement at the floating-point floor confirms that the GPU-batched and CPU
executions compute the same recursive logit SUE fixed point. Nothing in the
destination-batched kernel path changes the solution.

\subsection{Comparison with trained GAT surrogate}
\label{sec:results_sue_gat}

Hu and Xie~\cite{HuXie2025} train two GAT variants on Wardrop UE ground
truth generated by CFW on the same five benchmark networks and the same
scenario protocol used here.
Table~\ref{tab:gat_benchmark} reports DiLLSUE accuracy at the
near-deterministic grid point $\tau = 100$ against the same CFW Wardrop UE
reference, alongside published GAT-L and GAT-P results from Hu and
Xie~\cite{HuXie2025}.

Note on model comparability: DiLLSUE targets the logit SUE fixed point,
while GAT surrogates approximate the Wardrop UE fixed point.
Both are evaluated against the same CFW reference, so the DiLLSUE MAPE values
include the residual equilibrium concept gap between logit SUE at $\tau = 100$
and Wardrop UE (traced across the full $\tau$ sweep in
Section~\ref{sec:bridge}). GAT MAPE is purely surrogate error against UE
ground truth.
The comparison is informative because practitioners face the same choice ---
equilibrium model and solver --- and both methods are evaluated on identical
inputs without any network-specific tuning.

\begin{table}[ht]
\centering
\caption{DiLLSUE at $\tau = 100$ against the CFW Wardrop UE reference,
         compared with GAT-L and GAT-P from Hu and Xie~\cite{HuXie2025}
         (their Table~3) on the same five networks and scenario protocol.
         DiLLSUE MAPE includes the residual SUE--UE equilibrium concept gap at
         $\tau = 100$. On the saturated unit-capacity networks (Barcelona,
         Winnipeg) and under the DSP filter this concept-plus-filter gap
         dominates.
         GAT results: mean across three or more training seeds from Hu and
         Xie~\cite{HuXie2025}. DiLLSUE requires no training.}
\label{tab:gat_benchmark}
\begin{tabular}{lrrrrrrrr}
\toprule
& \multicolumn{2}{c}{DiLLSUE-FullGraph} &
  \multicolumn{2}{c}{DiLLSUE-DSP} &
  \multicolumn{2}{c}{GAT-L} &
  \multicolumn{2}{c}{GAT-P} \\
\cmidrule(lr){2-3}\cmidrule(lr){4-5}\cmidrule(lr){6-7}\cmidrule(lr){8-9}
Network & MAPE & $R^2$ & MAPE & $R^2$ & MAPE & $R^2$ & MAPE & $R^2$ \\
\midrule
Sioux Falls           & \textbf{0.33} & 0.999 & 24.1 & 0.705 & 6.6 & 0.973 & 8.4 & 0.963 \\
Eastern Massachusetts & 11.2 & 0.998 & 22.1 & 0.904 & 8.1 & 0.975 & \textbf{4.7} & 0.979 \\
Anaheim               & --   & --    & 72.9 & 0.879 & 8.5 & 0.980 & \textbf{7.0} & 0.991 \\
Barcelona             & --   & --    & 65.0 & 0.962 & \textbf{4.6} & 0.991 & 6.7 & 0.989 \\
Winnipeg              & --   & --    & 73.7 & 0.932 & 4.9 & 0.990 & \textbf{3.8} & 0.995 \\
\bottomrule
\end{tabular}
\par\smallskip
\footnotesize MAPE in \%. FullGraph is reported within its well-posed scope
(Section~\ref{sec:setup}).
\end{table}

Where the exact cyclic model reaches its deterministic limit, the
zero-training solver is more accurate than the trained surrogates: 0.33\%
MAPE on Sioux Falls against 6.6--8.4\% for GAT, at $R^2 = 0.999$.
On Eastern Massachusetts the FullGraph residual (11.2\% mean, 4.7\% median)
is comparable to the trained surrogates.
On the remaining networks the reported DSP variant carries both the filter
approximation and the SUE--UE concept gap, and the trained surrogates ---
which are fitted directly to UE labels --- remain closer to the UE reference.
The structural trade-off, developed in Section~\ref{sec:bridge} and the
Discussion, is that DiLLSUE solves an exact behavioural equilibrium with a
certified convergence gap and no training, whereas the surrogate reproduces
the labels it was trained on within their distribution.

DiLLSUE-FullGraph requires no training, generates no upfront label cost, and
applies to any network from its description alone.
The GAT training pipeline requires running the reference solver at least
1{,}200 times to generate labelled flows before any inference is possible.
On Sioux Falls, the break-even threshold --- the scenario count at which the
full GAT pipeline cost equals the cost of running CFW directly --- is
approximately 2{,}890 scenarios. On Barcelona it is approximately 2{,}133
scenarios~\cite{HuXie2025}.
Below these thresholds, running the reference solver directly is cheaper.
DiLLSUE-FullGraph has no break-even threshold.


\subsection{Compute time}
\label{sec:results_compute}

Table~\ref{tab:compute} reports median per-scenario wall time at $\tau = 10$
under DiLLSUE-FullGraph on GPU (NVIDIA A100) and on CPU (Intel Xeon 8360Y,
32 cores), running the identical IFT-Newton algorithm to the same
$\mathrm{gap}_{\mathrm{rel}} < 10^{-7}$ criterion.
Because algorithm and convergence behaviour are identical, the wall-clock
ratio measures the hardware contribution of destination batching alone.

\begin{table}[ht]
\centering
\caption{Median per-scenario wall time at $\tau = 10$,
         $\mathrm{gap}_{\mathrm{rel}} < 10^{-7}$, identical IFT-Newton
         algorithm on both platforms.
         GPU: NVIDIA A100-SXM4 80\,GB (CSD3 Ampere).
         CPU: Intel Xeon Platinum 8360Y (CSD3 icelake), 32 cores.
         Ratio = CPU / GPU.}
\label{tab:compute}
\begin{tabular}{lrrr}
\toprule
Network & GPU (s) & CPU (s) & Ratio \\
\midrule
Sioux Falls           & 41.7 & 16.4 & 0.4 \\
Eastern Massachusetts & 34.2 & 42.0 & 1.2 \\
Barcelona             & 82.3 & 247.1 & 3.0 \\
Winnipeg              & 85.5 & 168.3 & 2.0 \\
\bottomrule
\end{tabular}
\end{table}

The ratio grows with network size, from 0.4 on Sioux Falls, where kernel
launch overhead dominates the tiny $24 \times 24$ value tensor and the CPU is
faster, to 3.0 on Barcelona.
This is the expected signature of destination batching: the GPU evaluates all
$|D|$ Bellman backward sweeps simultaneously as operations on the
$|\mathcal{N}| \times |D|$ value tensor (Section~\ref{sec:gpu}), so its
advantage scales with the amount of per-pass parallel work.
On Barcelona ($|D| = 110$ destination zones), a single Bellman pass on CPU
requires 110 sweeps. On the A100, all 110 are computed in one kernel call.
Two further reserves of GPU parallelism remain unexploited in these
measurements: scenarios are solved one at a time (the Newton outer step is
per-scenario), and the benchmark networks are one to two orders of magnitude
smaller than planning-scale models, where the per-destination tensor grows
into the regime the hardware is designed for.
The destination-batching architecture itself is structurally unavailable to
path-based SUE solvers regardless of implementation
effort~\cite{HeywoodEtAl2019}.

\subsection{DSP filter as a secondary approximation}
\label{sec:results_filter}

The DSP acyclicity filter (Section~\ref{sec:filter}) restricts the Bellman
backward sweep to free-flow efficient links, converting the full-graph
iteration to a single topological pass at the cost of a modelling
approximation.
Measured directly against the converged FullGraph equilibrium on the same
scenarios, the DSP approximation on Sioux Falls is a stable plateau of
approximately 24\% mean MAPE across the entire $\tau$ grid: the free-flow
shortest-path DAG excludes links that carry flow at the congested equilibrium,
and this exclusion does not diminish as route choice sharpens.
On Eastern Massachusetts the DSP gap is 26\% median MAPE at $\tau = 10$,
falling to 15\% at $\tau = 100$.

The equilibrium-cost refresh (eDSP) rebuilds the DAG from current equilibrium
costs at each outer iteration, and on Eastern Massachusetts it closes most of
the remaining gap: median MAPE against FullGraph falls from 21.5\% (DSP-level)
at $\tau = 10$ to 4.7\% at $\tau = 30$ and 2.9\% at $\tau = 100$.
The refresh recovers exactly the links whose congested costs make them
efficient even though their free-flow costs do not.
Filter choice is therefore not a neutral implementation detail. DSP is
recommended as the default single-pass filter because Section~S2 in the
Supplementary Material guarantees a connected DAG for every destination, a
property the coarser BFS hop-count filter does not provide.
BFS results are in Supplementary Section~S3, and full eDSP results with
their stability envelope are in Supplementary Section~S4.

\subsection{Convergence to Wardrop user equilibrium}
\label{sec:bridge}

Figure~\ref{fig:bridge} traces the computed logit SUE toward the
independently computed Wardrop UE across the full $\tau$ sweep on every
network.
As $\tau$ grows, MAPE against the CFW reference falls monotonically on every
network for both variants shown: on Sioux Falls, FullGraph falls from 28.6\% at
$\tau = 1$ to 0.33\% at $\tau = 100$, crossing below the best trained GAT
surrogate at $\tau \approx 3$. On Eastern Massachusetts it falls from 86\% at
$\tau = 10$ to 11.2\% at $\tau = 100$.
This monotone approach to a limit computed by an entirely different solver
family (convex combinations on the Beckmann program, no shared code, no
shared fixed-point structure) is the model-correctness anchor of the paper: a
defective loading map, filter, or outer solver would not reproduce the
theoretical $\tau \to \infty$ limit of the logit SUE model.

\begin{figure}[ht]
\centering
\includegraphics[width=\textwidth]{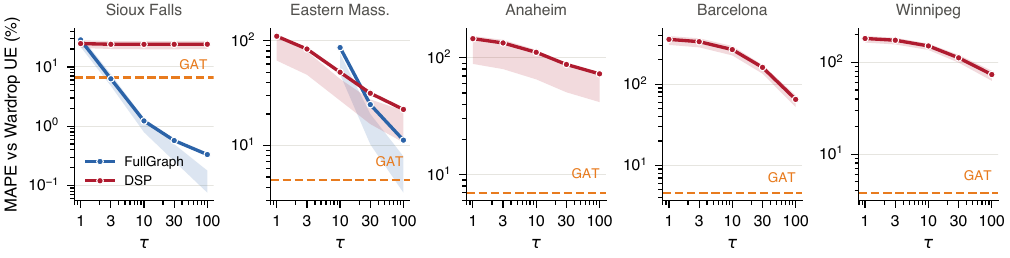}
\caption{SUE $\to$ UE bridge. MAPE of converged DiLLSUE link flows against
the CFW Wardrop UE reference as a function of the dimensionless sharpness
$\tau$, per network (mean over converged scenarios, ribbon: interquartile
range, 200 scenarios per cell).
FullGraph (blue) is shown within its well-posed scope
(Section~\ref{sec:setup}). DSP (red) on all five networks.
Dashed line: best published trained-GAT MAPE on that network (Hu \&
Xie~2025).
On Sioux Falls the zero-training exact solver crosses below the trained
surrogate at $\tau \approx 3$.
The DSP curves flatten toward their filter-approximation ceilings
(Section~\ref{sec:results_filter}). On the two saturated unit-capacity
networks (Barcelona, Winnipeg) the SUE--UE concept gap remains large at
$\tau = 100$ because the capped cost function decouples costs from flows over
much of the network.}
\label{fig:bridge}
\end{figure}

The bridge also separates the three error sources that the single-number GAT
comparison of Table~\ref{tab:gat_benchmark} conflates.
Implementation error is bounded by Table~\ref{tab:dita_sue_acc} at the
floating-point floor.
The filter approximation is the vertical offset between the DSP curve and the
FullGraph curve where both exist.
The remainder is the genuine behavioural difference between logit SUE at
finite $\tau$ and deterministic Wardrop assignment, which is a modelling
choice, not an error to be minimised.

\section{Discussion}
\label{sec:discussion}

DiLLSUE is, to our knowledge, the first GPU implementation of link-based logit
SUE assignment, requiring no route enumeration, no training data, and no
network-specific tuning.
It combines destination-batched recursive logit Bellman loading --- the
algorithmic structure that makes accelerator execution possible --- with an
IFT-Newton outer solver. The GPU and CPU executions agree at
the floating-point floor, and the computed equilibrium converges monotonically
to the independently computed Wardrop UE as dispersion vanishes.
On the most congested benchmark network the zero-training solver is more
accurate in that limit than the published trained GAT surrogates evaluated
on the same inputs.
The following sections discuss the conceptual and practical significance of
these results.

\subsection*{Why link-level recursive logit enables scalable SUE}

Existing SUE methods operate in path space.
Path-based MSA, Dial's STOCH~\cite{Dial1971}, and the path-based DiTA
formulation~\cite{NieLi2026} all enumerate routes explicitly.
Route enumeration scales at least exponentially with network size: even the
moderate Sioux Falls network (24 nodes, 76 links) admits 1{,}632{,}820
distinct simple paths across 528 OD pairs.
On larger networks, complete enumeration is computationally intractable.
Recursive logit~\cite{FosgerauEtAl2013} reformulates the logit SUE loading
step as a backward Bellman sweep on a directed acyclic subgraph, computing
exact logit choice probabilities without enumerating routes.
The computational cost scales with the number of links and destinations, not
with the number of paths.
This is the change that makes link-level logit SUE tractable on planning-scale
networks.

The acyclicity filter is the enabling structural requirement.
The Bellman equation~\eqref{eq:bellman} is well-defined only on a DAG.
On cyclic road networks, a filter must project the link set onto a DAG before
the backward sweep can proceed.
Section~S2 establishes that the DSP filter produces a
connected DAG for every destination, guaranteeing that the Bellman sweep is
both well-defined and that every reachable OD pair has at least one feasible
path in the filtered subgraph.
No prior work constructs or validates an acyclicity filter for recursive logit
in the forward assignment setting.
The filter design question was never forced until recursive logit was applied
to standard benchmark assignment networks without observed GPS paths to
constrain the active link set.

\subsection*{Outer solver and GPU destination-batching}

DiLLSUE separates two design questions: how many outer iterations the
equilibrium requires, and how much each iteration costs.

On the first, the four-way outer-solver comparison on Sioux Falls
(Supplementary Section~S1) shows that plain MSA fails to reach
$\mathrm{gap}_{\mathrm{rel}} < 10^{-7}$ within a 10{,}000-iteration ceiling,
adaptive averaging (SRA)~\cite{LiuEtAl2009} converges in hundreds to
thousands of iterations, Anderson mixing~\cite{Walker2011} converges in
roughly nine hundred cheap iterations and is fastest in wall time on that
network, and the production IFT-Newton reaches tolerance in one to two
hundred outer steps.
Algorithm comparisons for logit SUE date to Maher~\cite{Maher1998}, and
Newton-type methods have been developed for the path-based
formulation, most recently in a concurrent spectral analysis of the logit
mapping~\cite{BagchiBoyles2026}. The comparison here is, to our knowledge,
the first for the cyclic recursive-logit fixed point, the first application
of Anderson mixing to traffic assignment, and the first Newton method whose
Jacobian is obtained by automatic differentiation through the inner value
iteration.
Newton's advantage is robustness across the whole $\tau$ grid: its
merit-safeguarded steps converge uniformly through the near-deterministic
regime in which the fixed-point map loses contraction and
extrapolation-based accelerators degrade.
The pattern has a single mechanism.
The averaging iteration contracts while route choice is dispersed. As $\tau$
grows and choice sharpens toward the deterministic shortest path, the loading
map's contraction expires, extrapolation methods run out of usable descent
directions, and only a curvature method retains a convergence mechanism.
The same calculus, run the other way, explains the inner loop: the Bellman
value operator is a strong contraction on these networks, so warm-started
power iteration reaches the value fixed point in a few tens of cheap passes
and beats any curvature method there.
Curvature is worth its cost only where contraction is weak, which is why the
solver applies Newton to the outer problem and plain iteration to the inner
one.
The production choice of IFT-Newton trades some wall time on easy cells for
uniform convergence over every reported cell.

On the second, the GPU destination-batching architecture stacks the inner
Bellman and absorption sweeps for all $|D|$ destinations into fixed-shape
tensor operations: a single scatter-logsumexp over the
$(|\mathcal{E}| \times |D|)$ link-destination message tensor replaces $|D|$
sequential CPU sweeps.
GPU acceleration of the all-or-nothing loading step has been demonstrated for
deterministic assignment via origin-parallel multi-source Bellman-Ford
(parallelising over source nodes)~\cite{HeywoodEtAl2019}.
The architecture here parallelises over destination nodes and is enabled by
the fixed-shape value tensor of recursive logit: path-based methods maintain
OD-pair-specific route sets of variable length and cannot be organised into a
fixed-shape tensor regardless of implementation effort.
The wall-clock ratios of Table~\ref{tab:compute} grow with network size
exactly as this structural argument predicts, and the per-destination layout
reduces tensor state by a factor of $|\mathcal{OD}|/|D|$ relative to a per-OD
layout (Table~\ref{tab:layout}), which is what makes larger networks
representable on a single device at all.

\subsection*{Comparison with trained surrogate models}

The trained GAT approach depends structurally on the reference solver it is
designed to replace.
The training pipeline runs the reference solver at least 1{,}200 times to
generate labelled flows, trains the neural network on those flows, and then
deploys the network for inference.
The break-even threshold---the scenario count at which the full GAT pipeline
cost falls below that of running the reference solver directly---is
approximately 2{,}890 scenarios on Sioux Falls and 2{,}133 on
Barcelona~\cite{HuXie2025}.
The regime where GAT provides a net time saving is narrow and requires a
fixed, calibrated network evaluated many times with new OD inputs that stay
within the training distribution.
Network changes, post-construction updates, or parameter perturbations beyond
the training range all require retraining at the same upfront cost.
DiLLSUE has no training dependency.
The same algorithm, with the same hyperparameters ($L_{\mathrm{val}} = 50$,
tolerance $10^{-7}$, identical Newton settings), produced every result in
this paper without any network-specific modification. Only the dimensionless
sharpness $\tau$ was varied by design.

Beyond training cost, GAT-L predicts link flows independently without
structural coupling between inflow and outflow at any node.
The predicted flow state can violate the fundamental continuity constraint at
any intermediate node.
Hu and Xie acknowledge this explicitly in their Table~9 and
Section~5.1~\cite{HuXie2025}.
DiLLSUE satisfies flow conservation at every non-destination node by the
structure of the forward absorption step~\eqref{eq:absorption}, and every
solution it returns carries a certified equilibrium gap~\eqref{eq:gap}.
A surrogate's prediction carries no such certificate.
Trained surrogates are also specific to the network on which they were fitted.
No cross-network generalisation is reported in Hu and Xie~\cite{HuXie2025}.
DiLLSUE applies to any network from the network description alone.

\subsection*{Logit SUE as a behavioural model}

Wardrop UE assumes that all travellers have perfect information and identical
route preferences~\cite{Wardrop1952}.
This assumption has been questioned on behavioural grounds since Daganzo and
Sheffi formalised the stochastic alternative~\cite{DaganzoSheffi1977,Sheffi1985}.
Logit SUE replaces the perfect-information assumption with a probabilistic
route choice model in which travellers have heterogeneous and imperfect
perceptions of travel cost.
The logit SUE fixed point has been connected to maximum-entropy user equilibrium
under specific day-to-day behavioural update rules~\cite{LiEtAl2024}.
The practical dominance of Wardrop UE in planning practice has reflected the
absence of scalable SUE solvers, not a consensus that perfect information is
the better behavioural assumption.
DiLLSUE-FullGraph provides a scalable implementation that makes this
behavioural choice computationally accessible.

The equilibrium gap between logit SUE and Wardrop UE is characterised across
five networks and the full $\tau$ grid in Section~\ref{sec:bridge} and
Supplementary Section~S6.
At fixed $\tau$ the gap varies by orders of magnitude across networks and
cannot be predicted from volume-to-capacity ratio alone. On lightly loaded
networks it can exceed 100\% median MAPE at moderate sharpness, establishing
equilibrium model choice as a first-order modelling decision on most
benchmark networks.

\subsection*{Acyclicity filter landscape and alternatives}

The recursive logit formulation of Fosgerau et
al.~\cite{FosgerauEtAl2013} has been used almost exclusively for route choice
estimation: recovering utility parameters from observed individual route or
GPS path data~\cite{TranEtAl2025}.
In that context, observed routes are inherently acyclic, the active link set
is constrained by the data, and the acyclicity filter problem does not arise.
Forward assignment on a general cyclic urban network without observed paths
requires the Bellman to define its own active link set from topology alone.
This is the setting of the present paper, and no prior work constructs or
validates an acyclicity filter for this purpose.

Three acyclicity filters for link-based logit assignment can be distinguished
by the strictness of their forward-leaning criterion.
Dial's eligible-link filter~\cite{Dial1971} is the most restrictive, requiring
both origin-forward and destination-forward conditions per OD pair.
This makes the filter OD-specific and unsuitable as a global Bellman active
set.
The DSP filter (this paper) requires only the destination-forward condition,
strictly expanding the eligible set relative to Dial.
DSP is the maximal filter under the free-flow distance ordering: any excluded
link satisfies the reverse inequality, and adding it would create a directed
cycle, violating the DAG property.
The BFS filter uses hop count rather than travel time and is not a subset of
DSP: some links included by BFS are excluded by DSP and vice versa.
Section~S2 establishes that BFS produces a DAG but does not
guarantee connectivity.

A fundamentally different alternative is the cyclic Markov chain formulation
of Akamatsu~\cite{Akamatsu1996}, which eliminates the acyclicity filter
entirely.
Travellers choose the next link at each node according to logit probabilities,
generating an absorbing Markov chain with the destination as the absorbing
state.
Link flows are the expected traversal counts under this chain, obtained by
solving a linear system $(I - \mathbf{P}^d)^{-1}\mathbf{q}^d$ per destination.
This is well-posed on any strongly connected graph without a filter, and the
destination-batching architecture is preserved because the linear system still
operates on an $(|\mathcal{N}| \times |D|)$ tensor.
An extension to the Network GEV model preserves the same
structure~\cite{OyamaEtAl2022}.

The iterative Bellman backward pass and forward absorption of DiLLSUE-FullGraph
are the power-iteration implementation of the Akamatsu absorbing Markov chain.
Both converge to the same recursive logit equilibrium that Fosgerau
et al.~\cite{FosgerauEtAl2013} formalised with modern random utility
foundations, whose existence and uniqueness follow from Baillon and
Cominetti~\cite{BaillonCominetti2008}.
Prior computational treatments of this family solve small to medium networks
on CPU~\cite{Akamatsu1996,BaillonCominetti2008,OyamaEtAl2022}. A
destination-batched GPU implementation of the recursive logit assignment
itself, validated against an independent reference and the Wardrop limit on
standard congested benchmarks, is the gap this paper fills.

The Akamatsu model allows travellers to revisit nodes, so it is not a
numerically exact computation of the Fisk~\cite{Fisk1980} entropy-regularised
equilibrium over simple paths.
The DSP and BFS filters restrict the model to acyclic paths, recovering a
closer approximation to the Fisk simple-path logit SUE at the cost of the
filter approximation quantified in Supplementary Section~S7.

\subsection*{Limitations}
\label{sec:limitations}

The cyclic full-graph model is not well-posed everywhere.
Its fixed point exists only when the spectral radius of the
exponentiated-utility weight matrix is below one, which fails on Eastern
Massachusetts at $\tau \leq 3$. On networks dense in very short links the
admitted cyclic walks can traverse links many times, so that converged
full-graph flows on Winnipeg substantially exceed path-interpretable levels
at low $\tau$.
On Anaheim the Newton outer stalled above tolerance on all cells, which we
attribute to conditioning of that network's centroid-connector structure.
Resolving it is left to future work.
For all three situations the acyclic DSP variant is the practical answer and
is reported throughout. This is the same restriction-for-well-posedness
trade-off that motivates prism-constrained formulations~\cite{OyamaEtAl2022}.
The eDSP refresh, effective on Eastern Massachusetts, can oscillate between
successive DAGs on symmetric congested networks (it converged on 68--99\% of
Sioux Falls scenarios depending on $\tau$). Its stability envelope is
characterised in Supplementary Section~S4.

DiLLSUE's robustness under partial or missing OD data is not characterised in
this study.
Hu and Xie~\cite{HuXie2025} demonstrate that GAT maintains reasonable accuracy
under 5--20\% missing OD data, which represents a genuine advantage of the
trained surrogate in data-constrained settings.

Barcelona and Winnipeg use uniform unit link capacities in the TNTP source
files. Under the perturbed demand both operate in the saturated regime of the
capped BPR function, where costs decouple from flows over much of the
network.
The large SUE--UE gaps reported for these networks at all $\tau$ are a
property of that regime rather than of the solver.

Wall-clock results are reported for per-scenario Newton solves on networks
one to two orders of magnitude smaller than planning-scale models. Scenario
batching and a planning-scale demonstration are left to future work.

\subsection*{Future directions}

Four directions follow directly from the findings.
First, empirical calibration of $\mu$ from observed traffic counts would
establish the range of practically relevant dispersion values and connect the
logit SUE model to measured network behaviour.
Second, scaling to planning-size networks: the per-destination tensor of a
metropolitan network exceeds single-device memory under Newton's unrolled
inner passes, and the dual cost-space formulations that decouple the
optimisation variable from the destination count~\cite{OyamaEtAl2022},
combined with the Anderson outer that our ablation found
fastest per iteration, are the natural route. The destination-batched loading
map developed here remains the inner engine in that design.
Third, the eDSP outer iteration provides a validated mechanism for closing
the filter approximation gap. Establishing its convergence conditions
analytically remains an open theoretical problem.
Fourth, recovering BPR parameters or the dispersion parameter $\mu$ from
observed link flows via gradient descent through the solver's computation graph
is the natural parameter estimation extension, enabled by the differentiable
formulation~\cite{NieLi2026}.

\section{Conclusion}
\label{sec:conclusion}

DiLLSUE is a zero-training solver for the link-based recursive logit SUE and,
to our knowledge, its first GPU implementation.
It combines destination-batched recursive logit Bellman loading with a
merit-safeguarded IFT-Newton outer solver selected through a systematic
four-way evaluation of outer solvers on the identical loading map.
The link-level formulation eliminates route enumeration and exposes a
fixed-shape $(|\mathcal{N}| \times |D|)$ value tensor that GPU operations
exploit directly.
The DSP acyclicity filter, proved to produce a connected DAG for every
destination (Supplementary Material, Section~S2), is evaluated as an
optional single-pass approximation, and its equilibrium-cost refresh (eDSP)
closes most of its approximation gap where the refresh is stable.

Three empirical findings on five standard open benchmark networks support the
solver's practical use.
First, GPU and CPU executions of the same algorithm agree at the
floating-point floor (worst-cell mean MAPE $6\times10^{-5}$\,\%,
$R^2 = 1.000$), confirming that the destination-batched implementation
computes the same recursive logit SUE fixed point on both platforms.
Second, the computed SUE converges monotonically to the independently
computed Wardrop UE as the dimensionless sharpness $\tau$ grows, reaching
0.33\% MAPE on Sioux Falls at $\tau = 100$ --- the model-correctness anchor
of the study, and a cell in which the zero-training solver is more accurate
than the published trained GAT surrogates of Hu and Xie~\cite{HuXie2025} on
identical inputs.
Third, the GPU advantage of destination batching grows with network size,
as the structural argument predicts, while every solution carries a certified
equilibrium gap that no surrogate provides.

The differentiable formulation makes DiLLSUE-FullGraph directly extensible
to parameter estimation and sensitivity analysis via automatic differentiation,
opening a route from forward assignment to data-driven model calibration
without changing the core solver.

\section*{Acknowledgements}

This research was supported by the Cambridge Commonwealth, European and
International Trust.
Additional support was provided by the Martin Centre for Architectural and
Urban Studies, University of Cambridge.
The authors thank Ben Stabler, Hillel Bar-Gera, and Elizabeth Sall for
maintaining the TransportationNetworks benchmark
repository~\cite{Stabler2023}.

\section*{Declaration of competing interest}

The authors declare that they have no known competing financial interests or
personal relationships that could have appeared to influence the work reported
in this paper.

\section*{Data availability}

All network data used in this study are publicly available from the
TransportationNetworks repository~\cite{Stabler2023}
(\url{https://github.com/bstabler/TransportationNetworks}).
The repository contains the network topology, demand matrices, and link
attribute files for all five benchmark networks used in this study.
No proprietary or restricted data were used.

\section*{Code availability}

All code required to reproduce the experiments reported in this paper,
including the DiLLSUE GPU solver and its filter variants, the CPU execution
path, the Ref-UE-CFW Frank-Wolfe solver, scenario generation, and all
plotting scripts, is available at
\url{https://github.com/yueli901/Mukara5}.


\clearpage
\section*{Supplementary Material}
\setcounter{section}{0}
\setcounter{table}{0}
\setcounter{figure}{0}
\setcounter{equation}{0}
\setcounter{proposition}{0}
\setcounter{corollary}{0}
\setcounter{theorem}{0}
\renewcommand\thesection{S\arabic{section}}
\renewcommand\thetable{S\arabic{table}}
\renewcommand\thefigure{S\arabic{figure}}
\renewcommand\theequation{S\arabic{equation}}
\renewcommand\theproposition{S\arabic{section}.\arabic{proposition}}
\renewcommand\thetheorem{S\arabic{section}.\arabic{theorem}}

\section{Self-regulating averaging: convergence analysis}
\label{sec:S1}

The method of successive averages (MSA) with the harmonic step
$\alpha_t = 1/t$~\cite{SheffiPowell1982,Sheffi1985} is the textbook
algorithm for link-based SUE.
Its convergence to the fixed point
$\mathbf{x}^* = \mathbf{f}(c(\mathbf{x}^*))$ is guaranteed because
$\sum_t 1/t = \infty$ and $\sum_t 1/t^2 < \infty$~\cite{RobbinsMonro1951}.
In practice, the convergence rate on the L2 fixed-point relative residual
$\mathrm{gap}_{\mathrm{rel}}(\mathbf{x}) =
\|\mathbf{x} - \mathbf{f}(c(\mathbf{x}))\|_2 / \|\mathbf{x}\|_2$
is far too slow for a production threshold of $10^{-7}$.

Table~\ref{tab:S1_msa_stall} reports MSA convergence on Sioux Falls
($\mu = 1.0$, single scenario).

\begin{table}[h]
\centering
\caption{MSA convergence on Sioux Falls ($\mu = 1.0$, single scenario,
  76 links, 24 zones). The solver is capped at 10{,}000 iterations and
  does not reach $\mathrm{gap}_{\mathrm{rel}} < 10^{-5}$, let alone the
  production threshold of $10^{-7}$. Values measured under the $L_\infty$
  criterion (pre-production); $\mathrm{gap}_{\mathrm{rel}}$ follows a
  qualitatively identical trend.}
\label{tab:S1_msa_stall}
\begin{tabular}{rr}
\toprule
Iteration & $\mathrm{gap}_{\mathrm{rel}}$ (L$_\infty$, pre-prod.) \\
\midrule
    10 & $4.62 \times 10^{-1}$ \\
   100 & $3.56 \times 10^{-2}$ \\
 1{,}000 & $3.72 \times 10^{-3}$ \\
 5{,}000 & $7.52 \times 10^{-4}$ \\
10{,}000 & $3.77 \times 10^{-4}$ (cap; not converged) \\
\bottomrule
\end{tabular}
\end{table}

Reaching $\mathrm{gap}_{\mathrm{rel}} < 10^{-5}$ under plain MSA would require an
estimated $10^5$ iterations, which is computationally infeasible at planning
scale; the production threshold of $10^{-7}$ is further out of reach.

The SRA algorithm~\cite{LiuEtAl2009} replaces the harmonic step with a
backtracking line search on the squared fixed-point residual merit function
$G(\mathbf{x}) = \|\mathbf{x} - \mathbf{f}(c(\mathbf{x}))\|_2^2$.
The update direction is $\mathbf{d}_t = \mathbf{f}(c(\mathbf{x}_t)) -
\mathbf{x}_t$ (the Frank-Wolfe direction, produced as a by-product of the
Bellman and absorption passes).
The step $\alpha_t$ is selected by halving from 1 until
$G(\mathbf{x}_t + \alpha_t \mathbf{d}_t) < G(\mathbf{x}_t)$ holds, with
MSA fallback $\alpha_t = 1/t$ if no halving succeeds within eight attempts.
The fallback preserves the Sheffi--Powell convergence
guarantee~\cite{SheffiPowell1982,Cantarella1997}.

Beyond SRA, two further outer solvers were evaluated in a production
four-way comparison on Sioux Falls at $\tau \in \{3, 10\}$ (fp64, 200
scenarios per cell, GPU): type-II Anderson mixing with window
$m = 5$~\cite{Walker2011} and the production IFT-Newton of the main paper,
all four sharing the identical dense Bellman and forward-absorption inner
passes.
Figure~\ref{fig:S_ladder} and Table~\ref{tab:S1_ladder} report the results.

\begin{figure}[h]
\centering
\includegraphics[width=0.6\textwidth]{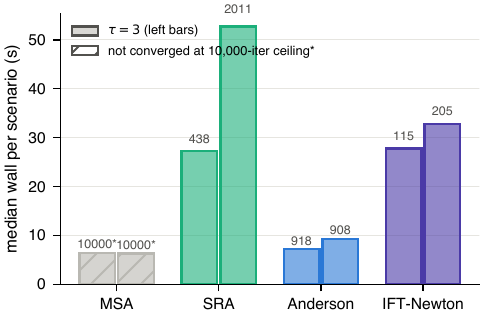}
\caption{Outer-solver comparison on Sioux Falls ($\tau \in \{3, 10\}$,
  fp64, 200 scenarios per cell). Bars: median wall time per scenario to
  $\mathrm{gap}_{\mathrm{rel}} < 10^{-7}$; annotations: median iteration
  counts. MSA bars (hatched) show wall time at the 10{,}000-iteration
  ceiling without converging. Newton iteration counts are outer Newton
  steps, each comprising a converged loading evaluation plus a GMRES solve,
  and are not directly comparable to the fixed-point iteration counts of
  the other three solvers.}
\label{fig:S_ladder}
\end{figure}

\begin{table}[h]
\centering
\caption{Outer-solver ladder on Sioux Falls (fp64, 200 scenarios per cell,
  median over scenarios). Converged = fraction of scenarios reaching
  $\mathrm{gap}_{\mathrm{rel}} < 10^{-7}$ within the 10{,}000-iteration
  ceiling.}
\label{tab:S1_ladder}
\begin{tabular}{llrrr}
\toprule
Outer solver & $\tau$ & Median iters & Median wall (s) & Converged \\
\midrule
MSA        & 3  & 10{,}000 & 6.4  & 0\%   \\
MSA        & 10 & 10{,}000 & 6.3  & 0\%   \\
SRA        & 3  & 438      & 27.2 & 100\% \\
SRA        & 10 & 2{,}011  & 52.8 & 98\%  \\
Anderson   & 3  & 918      & 7.1  & 100\% \\
Anderson   & 10 & 908      & 9.2  & 100\% \\
IFT-Newton & 3  & 115      & 27.8 & 100\% \\
IFT-Newton & 10 & 205      & 32.7 & 100\% \\
\bottomrule
\end{tabular}
\end{table}

Three observations drive the production choice.
Plain MSA is cheapest per iteration but never reaches the production
threshold.
Anderson mixing is the fastest to tolerance in wall time on this network:
its extrapolation approximates the Newton step from the residual history ---
a matrix-free quasi-Newton --- at the cost of a small $O(m^2)$ least-squares
solve per iteration.
IFT-Newton uses one to two orders of magnitude fewer outer steps than the
extrapolation methods; each step is expensive (a converged inner loading
plus a GMRES solve), so its wall time on this easy cell is higher than
Anderson's.
Newton's advantage is uniformity: its merit-safeguarded steps converge on
every reported cell of the full $\tau$ grid, including the
near-deterministic regime in which the fixed-point map loses contraction
and extrapolation-based accelerators degrade.
The production runs of the main paper therefore use IFT-Newton throughout,
accepting a wall-time premium on easy cells for uniform convergence over
the grid.
A convergence certificate for a safeguarded variant of the Anderson
iteration is given in Section~\ref{sec:S8}.

\section{Acyclicity filter: formal properties}
\label{sec:S2}

The destination shortest-path (DSP) filter defines the active link set for
destination $d$ from base free-flow costs:
\begin{equation}
\mathcal{E}^d_{\mathrm{DSP}} =
  \bigl\{(u,v)\in\mathcal{E} : d_{\mathrm{FF}}(v,d) < d_{\mathrm{FF}}(u,d)\bigr\},
\label{eq:S_dsp}
\end{equation}
where $d_{\mathrm{FF}}(n,d)$ is the free-flow shortest-path distance from
node $n$ to destination $d$, computed by Dijkstra on the reversed network.
The filter is computed once per network and never updated during assignment.

\begin{proposition}[DSP DAG and connectivity]
\label{prop:S2_dsp_dag}
For any destination $d\in D$, the DSP-filtered subgraph
$(\mathcal{N},\mathcal{E}^d_{\mathrm{DSP}})$ is a directed acyclic graph.
Moreover, for every origin $o$ reachable from $d$ in free-flow, at least one
$o\to d$ path exists in $\mathcal{E}^d_{\mathrm{DSP}}$.
\end{proposition}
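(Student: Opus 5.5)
The plan is to treat the free-flow distance $d_{\mathrm{FF}}(\cdot,d)$ as a potential on the nodes and use it for both claims. The potential strictly decreases along every active link, which gives acyclicity. Strict positivity of free-flow times $t^0_e>0$ then gives connectivity along a shortest path. Throughout, fix a destination $d\in D$ and write $\pi(n)=d_{\mathrm{FF}}(n,d)\in[0,\infty]$, with $\pi(n)=\infty$ when $d$ cannot be reached from $n$.

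\emph{Acyclicity.} I will argue by contradiction. Suppose $(\mathcal{N},\mathcal{E}^d_{\mathrm{DSP}})$ contained a directed cycle $n_0\to n_1\to\cdots\to n_k=n_0$. Every link $(n_i,n_{i+1})$ of the cycle lies in $\mathcal{E}^d_{\mathrm{DSP}}$, so by \eqref{eq:S_dsp} we have $\pi(n_{i+1})<\pi(n_i)$ for each $i$. Chaining these inequalities around the cycle gives $\pi(n_0)<\pi(n_0)$, which is impossible.

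Infinite potentials need a brief check. If $\pi(u)=\infty$, the strict inequality only admits links into nodes $v$ with $\pi(v)<\infty$. If $\pi(v)=\infty$, then $\pi(v)<\pi(u)$ fails, so no such link is admitted. In either case no link is ever admitted between two infinite-potential nodes, and the chain argument holds in the extended reals. Equivalently, sorting nodes by decreasing $\pi$ (ties broken arbitrarily) gives a topological order of the filtered graph. This is the order used by the single backward sweep.

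\emph{Connectivity.} I read the hypothesis "$o$ reachable from $d$ in free-flow" as meaning that the OD pair is feasible, i.e.\ $d$ is reachable from $o$ in $\mathcal{G}$. Then $\pi(o)<\infty$, and the finite network has a free-flow shortest path $o=n_0\to n_1\to\cdots\to n_k=d$. By the optimal-substructure property of shortest paths, every suffix of this path is itself shortest. Hence $\pi(n_i)=t^0_{(n_i,n_{i+1})}+\pi(n_{i+1})$ for each $i$. Since $t^0_e>0$, this gives $\pi(n_{i+1})<\pi(n_i)$, so every link of the path belongs to $\mathcal{E}^d_{\mathrm{DSP}}$, and the path is an $o\to d$ path in the filtered graph. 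It also follows that every node $n$ with $\pi(n)<\infty$ can reach $d$ in the filtered subgraph. This is the precise sense of "connected DAG" used in the main text.

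\emph{Main obstacle.} The mathematics is elementary; the delicate step is the strictness in the connectivity argument. With zero-cost links, ties $\pi(v)=\pi(u)$ would drop shortest-path links and break connectivity. The hypothesis $t^0_e>0$ from Section~\ref{sec:notation} is exactly what excludes this, and I will point that out explicitly. I will also state the interpretation of the reachability hypothesis and the convention for infinite potentials, since the literal wording ("reachable from $d$") would not suffice when $\mathcal{G}$ is not strongly connected.
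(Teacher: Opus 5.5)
Your proposal is correct and follows the paper's proof essentially step for step. Acyclicity comes from chaining the strict decrease of $d_{\mathrm{FF}}(\cdot,d)$ around a putative cycle, and connectivity comes from showing that every link of a free-flow shortest path lies in $\mathcal{E}^d_{\mathrm{DSP}}$ because $t^0_e>0$; your treatment of infinite potentials and your reading of the hypothesis as ``$d$ reachable from $o$'' (equivalently, $o$ reachable from $d$ in the reversed network on which Dijkstra is run) simply make explicit what the paper's proof leaves implicit.
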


\begin{proof}
\emph{Acyclicity.}
Suppose $v_0 \to v_1 \to \cdots \to v_k \to v_0$ is a directed cycle in
$\mathcal{E}^d_{\mathrm{DSP}}$.
Each link satisfies $d_{\mathrm{FF}}(v_{m+1},d) < d_{\mathrm{FF}}(v_m,d)$
(indices mod $k{+}1$).
Summing around the cycle gives
$d_{\mathrm{FF}}(v_0,d) < d_{\mathrm{FF}}(v_0,d)$, a contradiction.

\emph{Connectivity.}
Let $\pi = (o = w_0, w_1, \ldots, w_K = d)$ be the free-flow shortest path.
For every consecutive pair $(w_i,w_{i+1})$ on $\pi$, the triangle-inequality
optimality condition gives
$d_{\mathrm{FF}}(w_i,d) = c_{(w_i,w_{i+1})} + d_{\mathrm{FF}}(w_{i+1},d)$.
Since $c_{(w_i,w_{i+1})} > 0$, it follows that
$d_{\mathrm{FF}}(w_{i+1},d) < d_{\mathrm{FF}}(w_i,d)$, so
$(w_i,w_{i+1}) \in \mathcal{E}^d_{\mathrm{DSP}}$.
Hence $\pi$ is a feasible $o\to d$ path in the DSP-filtered subgraph.
\end{proof}

Because $\mathcal{E}^d_{\mathrm{DSP}}$ is a DAG, the Bellman backward sweep
converges in a single topological pass.
Logit SUE is unique on any directed acyclic subgraph~\cite{Fisk1980}, so the
DSP-filtered equilibrium is well-defined.

The BFS filter uses hop count in place of travel-time distances:
\begin{equation}
\mathcal{E}^d_{\mathrm{BFS}} =
  \bigl\{(u,v)\in\mathcal{E} : h_d(v) < h_d(u)\bigr\},
\label{eq:S_bfs}
\end{equation}
where $h_d(n)$ is the minimum hop count from $n$ to $d$ on the reversed
network, computed by BFS.
The BFS filter is topology-invariant and requires only a single BFS per destination.

\begin{corollary}[BFS DAG, connectivity uncertain]
\label{cor:S2_bfs_dag}
The BFS-filtered subgraph $(\mathcal{N},\mathcal{E}^d_{\mathrm{BFS}})$
is a directed acyclic graph for every destination $d$.
Connectivity is not guaranteed: a link $(u,v)$ with $h_d(u) = h_d(v)$ is
excluded even when it lies on the only available $u\to d$ route.
\end{corollary}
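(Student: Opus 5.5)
The statement has two parts: acyclicity, and the claim that connectivity can fail. Acyclicity follows the acyclicity half of Proposition~\ref{prop:S2_dsp_dag} word for word, with the integer-valued potential $h_d$ replacing $d_{\mathrm{FF}}(\cdot,d)$. Suppose $v_0\to v_1\to\cdots\to v_k\to v_0$ is a directed cycle in $\mathcal{E}^d_{\mathrm{BFS}}$. Every link on it satisfies $h_d(v_{m+1})<h_d(v_m)$, indices mod $k{+}1$. Chaining these inequalities around the cycle gives $h_d(v_0)<h_d(v_0)$, a contradiction. The only point to check is that $h_d$ is a well-defined real-valued potential. Nodes that cannot reach $d$ get $h_d=+\infty$. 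Two such nodes never satisfy the strict inequality $h_d(v)<h_d(u)$, so no cycle can pass only through infinite-potential nodes. A cycle that touches a finite-potential node must strictly decrease at every step, which is impossible around a cycle. In words, $h_d$ is a strict potential on the filtered link set, so the proof is the same telescoping argument as before.

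For the non-guarantee of connectivity, I first note the contrast with Proposition~\ref{prop:S2_dsp_dag}. If $u$ reaches $d$ at all, BFS optimality gives a neighbour $v$ with $h_d(v)=h_d(u)-1$. So a path using only strictly decreasing links always exists from $u$ to $d$ inside the unfiltered graph. Because of this, the honest content of ``connectivity uncertain'' must concern the connectivity that matters in practice. One reading is that the free-flow shortest path or the relevant routes are lost. The other is that the filtered network, restricted to admissible links such as centroid connectors or links not usable as through links, disconnects some origin.

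My plan is to prove the statement exactly as written. A link with $h_d(u)=h_d(v)$ is excluded by definition, since the inequality is strict. I will then exhibit a small explicit network in which such a tie link is needed. Take a free-flow network where the only travel-time-efficient route from $u$ uses a same-level link, while the hop-decreasing alternative passes through a link that is unavailable, for example one with prohibitive cost or one that is not a through link. Concretely, take nodes $u,v,w,d$ with links $(u,v)$, $(v,d)$, $(u,w)$, $(w,x)$, $(x,d)$ arranged so that $h_d(u)=h_d(v)$ under a chosen adjacency. Then $\mathcal{E}^d_{\mathrm{BFS}}$ drops $(u,v)$, and I check directly that the remaining filtered links give $u$ no route to $d$ among the permitted links.

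The main obstacle is this counterexample. By the BFS-optimality remark above, in a pure unrestricted graph every reachable node retains a hop-decreasing route. So the counterexample has to rely on the practical restrictions: centroid nodes that cannot be traversed, or reachability defined in the free-flow-cost sense used in Proposition~\ref{prop:S2_dsp_dag}. I would first try to build it using a non-through centroid node, since that restriction is what the statement's phrase ``the only available route'' most naturally refers to. I would then state the corollary's second sentence as the observation that the filter excludes tie links regardless of necessity.
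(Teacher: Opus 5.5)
Your acyclicity argument is the paper's argument: the same telescoping contradiction as in Proposition~\ref{prop:S2_dsp_dag}, with $h_d$ in place of $d_{\mathrm{FF}}(\cdot,d)$. The $+\infty$ bookkeeping can be shortened. Since $(u,v)\in\mathcal{E}$ gives $h_d(u)\le h_d(v)+1$, no filtered link ever touches a node that cannot reach $d$.

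The gap is in the second half. The counterexample you plan to build does not exist under the paper's definitions, and your own BFS-optimality remark is the proof of that. Every $u\neq d$ with $h_d(u)<\infty$ has a link $(u,v)\in\mathcal{E}$ with $h_d(v)=h_d(u)-1$. That link is in $\mathcal{E}^d_{\mathrm{BFS}}$, and iterating from $v$ gives a $u\to d$ path inside the filter. Equivalently, $\mathcal{E}^d_{\mathrm{BFS}}$ is exactly the DSP filter computed with unit link costs. The connectivity half of Proposition~\ref{prop:S2_dsp_dag} therefore applies verbatim, with the shortest-hop path in the role of the free-flow shortest path.

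The hypothesis of the statement is in fact self-contradictory. Suppose $(u,v)$ is the first link of the only $u\to d$ route $\pi$. Then $h_d(u)=|\pi|$, and $|\pi|-1\ge h_d(v)\ge h_d(u)-1$. Hence $h_d(v)=h_d(u)-1\neq h_d(u)$.

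Your sketched network illustrates this. There $h_d(v)=1<2=h_d(u)$, so $(u,v)$ is kept. The only tie link is $(u,w)$, and dropping it leaves $u\to v\to d$ intact. What the example does show is an accuracy defect, not a connectivity failure: if $u\to w\to x\to d$ is the free-flow shortest path, BFS discards it.

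The rescues you propose change the model rather than prove the statement. The Bellman sum runs over all outgoing links, every $t^0_e>0$ is finite (so cost-reachability is plain reachability), and $h_d$ is computed on the same link set. If a through-centroid restriction were imposed, $h_d$ would be computed on the restricted network, and the same argument would restore reachability.

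Your fallback, recording that tie links are excluded by the strict inequality, is exactly what the paper's proof does. But that only restates the definition; it does not establish that connectivity can fail. This is consistent with Section~\ref{sec:S3}, where the tie failure never arises. The correct conclusion is the opposite of the corollary's second sentence: like DSP, the BFS filter yields a DAG that preserves reachability to $d$, and what separates the two filters is accuracy.
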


\begin{proof}
Acyclicity follows by the same contradiction argument as above applied to
integer hop counts.
For the connectivity failure: a link $(u,v)$ satisfying $h_d(u) = h_d(v)$
is excluded from $\mathcal{E}^d_{\mathrm{BFS}}$ even when it lies on the
only available route to $d$.
The DSP filter avoids this because every link on the free-flow shortest path
satisfies $d_{\mathrm{FF}}(\mathrm{head},d) < d_{\mathrm{FF}}(\mathrm{tail},d)$
strictly (Proposition~\ref{prop:S2_dsp_dag}).
\end{proof}

Empirical convergence of the BFS filter on all five benchmark networks is
documented in Section~\ref{sec:S3}.

\section{BFS acyclicity filter: results}
\label{sec:S3}

The BFS filter defines a topology-based acyclic subgraph that does not require
free-flow shortest-path distances.
It is computationally cheaper to construct than DSP (BFS on reversed network
versus Dijkstra), but the connectivity failure mode established in
Section~\ref{sec:S2} is a theoretical concern.

The hop-count tie failure mode does not arise on any of the five benchmark
networks tested here.
On Barcelona and Winnipeg, where the TNTP source files assign uniform unit
capacities, the BPR cost function with large $x/c$ ratios requires the
clamp $x/c \leq 100$ applied to all link flows before the power operation;
after this correction, both BFS and DSP reach convergence tolerance on all
scenarios.

Table~\ref{tab:S3_bfs} reports the production BFS run at $\tau = 10$
(fp64, tol $10^{-7}$, 200 scenarios per network) against the CFW Wardrop
UE reference, alongside the DSP variant at the same $\tau$.
BFS converges on every scenario of every network, but its hop-count
criterion is a substantially coarser proxy for cost-efficiency than the
free-flow shortest-path criterion: its distance to the UE reference is
1.7 times DSP's on Sioux Falls and 2 to 4 times DSP's on the larger
networks.
BFS is therefore retained only as a construction-cost baseline; DSP
dominates it at equal per-iteration cost.

\begin{table}[h]
\centering
\caption{BFS versus DSP at $\tau = 10$: median MAPE\,(\%) of converged link
  flows against the CFW Wardrop UE reference, 200 scenarios per network.}
\label{tab:S3_bfs}
\begin{tabular}{lrr}
\toprule
Network & BFS & DSP \\
\midrule
Sioux Falls           & 39.3  & 23.9 \\
Eastern Massachusetts & 84.4  & 33.1 \\
Anaheim               & 250.2 & 80.3 \\
Barcelona             & 386.4 & 255.5 \\
Winnipeg              & 468.3 & 148.6 \\
\bottomrule
\end{tabular}
\end{table}

\section{Equilibrium-cost DSP (eDSP): outer-loop refinement}
\label{sec:S4}

The DSP filter is constructed from free-flow travel times.
At congested equilibrium, some links that increase free-flow distance to a
destination may become shorter in equilibrium travel time.
Replacing free-flow with equilibrium-cost distances in the DSP criterion
defines the equilibrium-cost DSP (eDSP), which includes those
backward-in-free-flow but forward-in-equilibrium links.

The outer-loop algorithm is:
\begin{align*}
M^{(0)}    &= \mathrm{DSP}(t^0), \\
\mathbf{x}^{(k)} &= \text{DiLLSUE}\!\left(M^{(k)}\right), \\
c^{(k)}    &= \mathrm{BPR}\!\left(\mathbf{x}^{(k)}\right), \\
M^{(k+1)}  &= \mathrm{DSP}\!\left(c^{(k)}\right),
\end{align*}
stopping when $M^{(k+1)} = M^{(k)}$ (byte-identical active-link set)
or after eight outer iterations.
Each inner DiLLSUE solve uses $L_{\mathrm{dtd}} = 3{,}000$ and
$L_{\mathrm{val}} = 50$.

The path-enumeration logit SUE from Section~\ref{sec:S7} serves as ground
truth for the results below.
Table~\ref{tab:S4_edsp} reports per-outer-iteration MAPE against this
baseline on Sioux Falls.

\begin{table}[h]
\centering
\caption{Equilibrium-cost DSP outer-loop convergence on Sioux Falls.
  Each cell is MAPE\,(\%) of DiLLSUE link flow $\mathbf{x}^{(k)}$ against
  the path-enumeration logit SUE baseline.
  Row $k=0$ uses the free-flow DSP filter (reproduces
  Table~\ref{tab:S7_filter_approx}).
  Dashes denote mask convergence.}
\label{tab:S4_edsp}
\begin{tabular}{rrrrrr}
\toprule
Outer $k$ & $\mu=0.1$ & $\mu=0.5$ & $\mu=1.0$ & $\mu=2.0$ & $\mu=5.0$ \\
\midrule
0 & 20.73 & 10.62 & 9.73  & 9.21  & 8.82  \\
1 & 15.63 & 3.37  & 3.60  & 3.93  & 4.10  \\
2 & 19.60 & 2.22  & 2.45  & 2.36  & 2.31  \\
3 & 16.41 & 1.79  & 0.10  & 0.024 & 0.022 \\
4 & 23.71 & 1.43  & 0.10  & 0.025 & 0.020 \\
5 & 19.82 & 0.74  & 0.10  & --    & --    \\
6 & 26.79 & 0.75  & --    & --    & --    \\
7 & 24.32 & 0.75  & --    & --    & --    \\
\bottomrule
\end{tabular}
\end{table}

Three convergence regimes emerge by $\mu$.
At high dispersion ($\mu \geq 1.0$), the outer loop converges in three
iterations to MAPE below 0.1\% and the mask stabilises thereafter.
At intermediate dispersion ($\mu = 0.5$), convergence is monotone but slow,
reaching 0.75\% MAPE by $k=6$.
At low dispersion ($\mu = 0.1$), the outer loop does not converge: MAPE
oscillates and the mask-difference count grows rather than shrinking.
The free-flow DSP filter used throughout the main paper remains the practical
default for single-pass solves.

In the production $\tau$-grid runs, the refresh's value and its stability
envelope both appear.
On Eastern Massachusetts, whose asymmetric topology leaves the free-flow
DAG far from the congested one, eDSP closes the DSP gap against the
converged FullGraph equilibrium from 21.5\% median MAPE at $\tau = 10$
(DSP level) to 4.7\% at $\tau = 30$ and 2.9\% at $\tau = 100$.
On the symmetric, heavily congested Sioux Falls network the rebuilt DAGs
can alternate between near-tied configurations: the refresh converged on
68--99\% of scenarios depending on $\tau$ (135 to 199 of 200), the
non-converged scenarios oscillating between masks rather than diverging.
A damped or hysteretic mask update is the natural remedy and is left to
future work.

\section{Convergence threshold and numerical precision}
\label{sec:S5}

Two implementation choices are linked and jointly diagnosed here: the
convergence threshold $\mathrm{tol} = 10^{-7}$ (tighter than the engineering
default $10^{-5}$) and double-precision arithmetic (fp64) rather than the
PyTorch default fp32.
Neither choice is independent of the other, and both are driven by the
flat-BPR characteristic of Barcelona and Winnipeg.

\paragraph*{Why tol~$= 10^{-7}$}

Barcelona and Winnipeg have BPR coefficients $b \approx 10^{-9}$ in TNTP
normalised units; the BPR cost factor $1 + a(x/c)^b$ remains close to 1 for
almost all link loadings.
The loading map $\mathbf{f}$ is therefore near-identity on most of the
network, and the basin of attraction of the SUE fixed point is wide.

At $\mathrm{tol} = 10^{-5}$ on Barcelona ($\mu = 1.0$, single diagnostic
cell):
\begin{itemize}
\item DiLLSUE-FullGraph (GPU) stopped at outer iteration $p_{50} = 9$,
  $\mathrm{gap}_{p_{50}} \approx 4 \times 10^{-6}$.
\item Ref-SUE-FullGraph (CPU) stopped at outer iteration $p_{50} = 3$,
  $\mathrm{gap}_{p_{50}} \approx 9 \times 10^{-6}$.
\item Both passed the $< 10^{-5}$ test; both reported ``converged.''
\item Total flow agreed to 5 significant figures on both sides.
\item \textbf{Per-link WMAPE: 113.9\%} between the two implementations.
\end{itemize}

The two solvers landed on different epsilon-fixed-points within the same flat
basin.
Tightening to $\mathrm{tol} = 10^{-7}$ forces both past the flat-basin
region:
\begin{itemize}
\item Both reached $p_{50} = 12$ outer iterations, $\mathrm{max} = 136$
  (identical distribution).
\item Total flow agreed to 7 significant figures.
\item \textbf{Per-link WMAPE: 0.000008\%} (fp64 round-off floor).
\end{itemize}

Tightening by two orders of magnitude in the criterion reduced CPU/GPU
disagreement by five orders of magnitude.
The compute cost is modest: Barcelona moved from 3 (basin-mismatch artifact)
to 12 outer iterations, but per-iteration time dropped $9\times$ from the
per-destination layout switch, so wall-clock at $10^{-7}$ is faster than the
old code at $10^{-5}$.
The criterion formula is unchanged (Sheffi 1985, Cantarella 1997); only the
threshold is tighter.

\paragraph*{Why fp64}

Four considerations jointly motivate fp64 throughout.

\emph{CPU/GPU comparability.}
The CPU reference solver (Ref-SUE-FullGraph) operates in
\texttt{numpy.float64}.
Running fp32 on the GPU would inject a dtype-induced precision floor of
roughly 0.001--0.01\% WMAPE into the Comparison~1 correctness check, with no
methodological justification.

\emph{tol~$= 10^{-7}$ requires fp64.}
fp32 carries roughly seven decimal digits of precision; its residual noise
floor is $\sim\!10^{-7}$ at best.
Running fp32 against a $10^{-7}$ stopping criterion places the threshold at
the precision floor: many scenarios would hover stochastically near the
criterion and never converge within the $10{,}000$-iteration ceiling.
fp32 and $\mathrm{tol} = 10^{-7}$ are not jointly feasible.

\emph{Barcelona BPR overflow safety.}
Barcelona's per-link BPR coefficient $b$ reaches 16.83 with capacities near
unity.
With the ratio clamp at 100 (applied before the power operation), the
worst-case BPR cost factor is approximately $100^{16.83} \approx 4.6 \times
10^{33}$.
fp32 represents up to $\sim\!3.4 \times 10^{38}$, leaving only four orders
of headroom.
A prior incident (Mukara4, May 2026) documented fp32 overflow on Barcelona
during autograd, invalidating 324 cells silently; fp64 removes this
fragility.

\emph{The headline speedup does not require fp32.}
After the per-destination layout switch, DiLLSUE-FullGraph achieves the
reported GPU speedup at fp64.
Switching to fp32 would nominally improve throughput but at the cost of
CPU/GPU comparability and convergence properties above, without changing any
qualitative finding.

\section{SUE versus UE equilibrium gap}
\label{sec:S6}

The comparisons in this section quantify the distance between two equilibrium
concepts: logit stochastic user equilibrium (SUE) under the recursive logit
formulation and Wardrop user equilibrium (UE) as computed by the Frank-Wolfe
method.
They measure a property of the models, not solver error.

Table~\ref{tab:S6_tau_sensitivity} reports the equilibrium gap as a
function of the dimensionless sharpness $\tau = \mu\bar{c}$ across all five
networks (main-paper equation for $\tau$; per-network $\mu = \tau/\bar{c}$
from the main paper's network table).
As $\tau$ increases, logit probabilities concentrate on lower-cost routes
and the gap decreases monotonically, consistent with the theoretical
convergence of logit SUE to Wardrop UE as $\mu \to \infty$.

\begin{table}[h]
\centering
\caption{SUE/UE equilibrium gap (median MAPE,\,\%, converged scenarios of
  200 per cell) as a function of $\tau$.
  FullGraph within its well-posed scope; DSP on all five networks.
  This is a model comparison, not a solver accuracy measure: on the
  saturated unit-capacity networks (Barcelona, Winnipeg) the capped BPR
  cost decouples costs from flows over much of the network and the concept
  gap remains large at every $\tau$.}
\label{tab:S6_tau_sensitivity}
\begin{tabular}{llrrrrr}
\toprule
Network & Variant & $\tau=1$ & $\tau=3$ & $\tau=10$ & $\tau=30$ & $\tau=100$ \\
\midrule
Sioux Falls           & FullGraph & 26.9 & 6.0 & 1.0 & 0.36 & 0.12 \\
Sioux Falls           & DSP       & 24.5 & 24.1 & 23.9 & 23.9 & 23.8 \\
Eastern Massachusetts & FullGraph & --   & --  & 57.6 & 14.0 & 4.7 \\
Eastern Massachusetts & DSP       & 73.1 & 56.2 & 33.1 & 21.2 & 15.1 \\
Anaheim               & DSP       & 107.7 & 98.6 & 80.3 & 62.5 & 51.3 \\
Barcelona             & DSP       & 344.9 & 321.6 & 255.5 & 153.4 & 59.5 \\
Winnipeg              & DSP       & 179.5 & 171.9 & 148.6 & 108.9 & 71.1 \\
\bottomrule
\end{tabular}
\end{table}

Two patterns stand out.
On congested networks with informative capacities (Sioux Falls, Eastern
Massachusetts), the exact FullGraph gap collapses toward zero as $\tau$
grows, while the DSP gap flattens at the filter's approximation ceiling.
On lightly loaded or saturated networks the gap at fixed $\tau$ is far
larger and closes slowly: route costs are nearly flow-independent there, so
logit choice spreads flow across many near-tied routes that deterministic
assignment never uses.
The SUE/UE gap therefore cannot be predicted from volume-to-capacity ratio
alone, and equilibrium model choice is a first-order modelling decision on
most benchmark networks.

\section{DSP filter: path coverage and approximation gap}
\label{sec:S7}

The DSP filter restricts the active path set to routes whose every link
strictly reduces free-flow distance to the destination.
Paths using backward links receive zero probability under DSP recursive logit.
In full path-enumeration logit SUE, these paths receive positive probability
proportional to $\exp(-\mu \cdot \mathrm{cost})$.
The DSP filter therefore introduces a modelling approximation whose magnitude
decreases with $\mu$ and vanishes as $\mu \to \infty$.

Note: path-enumeration SUE and recursive logit SUE are two distinct models
even without the DSP filter.
Path enumeration requires acyclic paths by definition; recursive logit
operates on the full graph via the absorbing Markov chain.
The comparison in this section is model-to-model at the level of the DSP
approximation, not a solver accuracy comparison.

On Sioux Falls (24 nodes, 76 links, 24 zones), all simple paths between each
OD pair were enumerated by depth-first search with a per-OD cap of 5{,}000
paths.
The cap was never reached: the network admits 1{,}632{,}820 distinct simple
paths across 528 active OD pairs (mean: 3{,}092 paths per OD pair).
A path-based logit SUE was solved by 5{,}000-iteration path-MSA with relative
convergence tolerance $10^{-5}$ at each
$\mu \in \{0.1,\,0.5,\,1.0,\,2.0,\,5.0\}$, using base OD demand and native
TNTP BPR parameters ($a = 0.15$, $b = 4$ on all links).
DiLLSUE-DSP was solved at the same $\mu$ values with $L_{\mathrm{dtd}} =
3{,}000$.

Table~\ref{tab:S7_filter_approx} reports the MAPE of DSP recursive logit
SUE against the path-enumeration baseline.

\begin{table}[h]
\centering
\caption{DSP filter approximation error on Sioux Falls: MAPE of
  DiLLSUE-DSP against full path-enumeration logit SUE
  (1{,}632{,}820 paths across 528 OD pairs), base OD demand,
  $L_{\mathrm{dtd}} = 3{,}000$, $L_{\mathrm{val}} = 50$.
  Total flow ratio is total DSP link flow divided by total
  path-enumeration link flow.}
\label{tab:S7_filter_approx}
\begin{tabular}{rrr}
\toprule
$\mu$ & MAPE\,(\%) & Total flow ratio \\
\midrule
0.1 & 20.73 & 0.779 \\
0.5 & 10.62 & 0.957 \\
1.0 &  9.73 & 0.973 \\
2.0 &  9.21 & 0.978 \\
5.0 &  8.82 & 0.979 \\
\bottomrule
\end{tabular}
\end{table}

The DSP recursive logit underestimates total flow by 22\% at $\mu = 0.1$
and by 2\% at $\mu \geq 1.0$.
Per-link MAPE decreases from 20.73\% at $\mu = 0.1$ to approximately 8.8\%
at $\mu = 5.0$ and does not approach zero, because the DSP filter
permanently excludes backward links regardless of their equilibrium
relevance.
For larger benchmark networks where full path enumeration is computationally
intractable, the relative magnitude of this approximation is unknown; the
$\approx\!10\%$ measured here is a lower bound.

\section{Convergence certificate for safeguarded Anderson mixing}
\label{sec:S8}

This section gives a convergence certificate for Anderson mixing applied to
the logit SUE fixed point, for a variant safeguarded by sufficient decrease
of the entropy-regularised SUE objective.
The production implementation uses a cheaper residual-based safeguard.
The certificate covers the Armijo-safeguarded variant exactly and the
production variant only empirically, and the gap is stated explicitly at the
end of the section.

\paragraph*{Setting}
Let $F$ denote the SUE loading map at dispersion $\mu > 0$ and
$\mathbf{r}(\mathbf{x}) = F(\mathbf{x}) - \mathbf{x}$ the fixed-point
residual.

\noindent\textbf{Assumption A1.}
Every link cost $c_e(\cdot)$ is continuously differentiable and strictly
increasing on $[0, \infty)$.

Under A1 the entropy-regularised objective of
Fisk~\cite{Fisk1980},
\begin{equation}
W_F(\mathbf{x}) = \sum_e \int_0^{x_e} c_e(v)\,dv
  \;-\; \frac{1}{\mu}\, H(\mathbf{x}),
\label{eq:S8_fisk}
\end{equation}
with $H$ the logit entropy at loading $\mathbf{x}$, is strictly convex and
continuously differentiable, and its unique minimiser $\mathbf{x}^*$ is the
SUE fixed point, $F(\mathbf{x}^*) = \mathbf{x}^*$~\cite{Fisk1980}.
The link-based counterpart of this objective for the cyclic Markovian model
is the entropy decomposition of Akamatsu~\cite{Akamatsu1996}.
Uniqueness of $\mathbf{x}^*$ is Fisk's result and is cited, not claimed.
The residual is a strict descent direction for $W_F$ away from the
equilibrium: $-\nabla W_F(\mathbf{x}) \cdot \mathbf{r}(\mathbf{x}) > 0$ for
$\mathbf{x} \neq \mathbf{x}^*$~\cite{Fisk1980,Sheffi1985}.

\paragraph*{Safeguarded algorithm}
Fix $\sigma \in (0,1)$ and $\sigma_{\mathrm{BT}} \in (0,1)$.
At each step $t$:
(1)~backtracking reference step: starting from $\alpha = 1$, halve $\alpha$
until
$W_F(\mathbf{x}_t + \alpha\,\mathbf{r}_t) \leq W_F(\mathbf{x}_t)
 - \sigma_{\mathrm{BT}}\,\alpha\,
 \bigl(-\nabla W_F(\mathbf{x}_t)\cdot\mathbf{r}_t\bigr)$,
giving $\mathbf{x}^{\mathrm{BT}}_{t+1}$ and decrease
$\Delta_t = W_F(\mathbf{x}_t) - W_F(\mathbf{x}^{\mathrm{BT}}_{t+1}) > 0$.
(2)~Anderson candidate $\mathbf{x}^{\mathrm{AA}}_{t+1}$ by the type-II
update of the main paper.
(3)~Armijo acceptance: take the candidate if
$W_F(\mathbf{x}_t) - W_F(\mathbf{x}^{\mathrm{AA}}_{t+1})
 \geq \sigma\,\Delta_t$, and the backtracking step otherwise.

\begin{theorem}[Safeguarded Anderson convergence]
\label{thm:S8_anderson}
Under A1, the safeguarded iteration converges to the unique SUE equilibrium
$\mathbf{x}^*$ from any feasible $\mathbf{x}_0 \geq \mathbf{0}$.
\end{theorem}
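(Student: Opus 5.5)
The proof is a descent argument on the Fisk objective $W_F$ of \eqref{eq:S8_fisk}: the Armijo test forces every accepted step, Anderson or backtracking, to decrease $W_F$ by at least a fixed fraction of the backtracking decrease $\Delta_t$. The steps are as follows.

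\emph{Monotone decrease and compactness.} By step (3), $W_F(\mathbf{x}_{t+1}) \le W_F(\mathbf{x}_t) - \sigma\Delta_t$ in both branches, since the backtracking branch gives decrease exactly $\Delta_t \ge \sigma\Delta_t$. So $W_F(\mathbf{x}_t)$ is nonincreasing and the iterates stay in the sublevel set $\mathcal{L}_0=\{\mathbf{x}\ge\mathbf{0}: W_F(\mathbf{x})\le W_F(\mathbf{x}_0)\}$.

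Strict convexity alone does not make $\mathcal{L}_0$ bounded. Boundedness comes from coercivity: A1 makes $c_e$ strictly increasing, so $\int_0^{x_e}c_e \ge c_e(0)x_e$ with $c_e(0)=t^0_e>0$, which grows linearly. The entropy term grows at most like $x\log x$ only when scaled by demand, and on feasible loadings it is bounded by a function of the fixed total demand. Hence $\mathcal{L}_0$ is compact.

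I also need feasibility, meaning the iterates stay in the domain where $H$ is defined.
\begin{itemize}
\item Backtracking steps are convex combinations of $\mathbf{x}_t$ and $F(\mathbf{x}_t)$, so they stay feasible.
\item Anderson candidates may leave the domain. I would treat $W_F=+\infty$ outside it, so such candidates automatically fail the Armijo test and are rejected.
\end{itemize}
Since $W_F$ is bounded below on $\mathcal{L}_0$ (continuous on a compact set), summing gives $\sum_t\Delta_t<\infty$, hence $\Delta_t\to0$.

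\emph{From $\Delta_t\to0$ to a vanishing residual.} Let $g(\mathbf{x})=-\nabla W_F(\mathbf{x})\cdot\mathbf{r}(\mathbf{x})$, which is continuous and $>0$ off $\mathbf{x}^*$. Standard Armijo-backtracking theory, applied with $\nabla W_F$ Lipschitz on a compact neighbourhood of $\mathcal{L}_0$ and $\mathbf{r}$ bounded there, gives an accepted step $\alpha_t\ge\bar\alpha>0$ whenever $g(\mathbf{x}_t)$ is bounded away from zero. In that case $\Delta_t\ge\sigma_{\mathrm{BT}}\bar\alpha\,g(\mathbf{x}_t)$. The precise bound is $\alpha_t \ge \min\{1,\,(1-\sigma_{\mathrm{BT}})g/(2L\|\mathbf{r}\|^2)\}$ after halving, so $\Delta_t\ge c\min\{g,\,g^2/\|\mathbf{r}\|^2\}$.

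Therefore $\Delta_t\to0$ forces $g(\mathbf{x}_t)\to0$. Take any accumulation point $\bar{\mathbf{x}}$, which exists by compactness of $\mathcal{L}_0$. By continuity $g(\bar{\mathbf{x}})=0$, so the cited strict-descent property gives $\bar{\mathbf{x}}=\mathbf{x}^*$. Since every accumulation point equals the unique $\mathbf{x}^*$, the whole sequence converges.

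\emph{Main obstacle.} The delicate step is the regularity needed for the backtracking lower bound: Lipschitz continuity of $\nabla W_F$ on $\mathcal{L}_0$, and continuity of $g$ up to the boundary of the feasible set, where the entropy gradient can blow up.

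My first approach is to show that the logit loading $F(\mathbf{x})$ has all route/link choice probabilities bounded below on the compact set of costs $c(\mathcal{L}_0)$. This would keep the backtracking iterates, which are convex combinations with $F$, uniformly interior after the first step. I would then restrict the argument to that interior compact set, where $H$ is smooth.

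If this fails, the fallback is Zangwill's global convergence theorem. It applies to the closed point-to-set map defined by steps (1)–(3) with descent function $W_F$, and needs only continuity of $g$ and closedness of the backtracking map.
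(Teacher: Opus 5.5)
Your proposal is essentially the paper's proof. Both use Armijo-safeguarded descent on $W_F$, with every accepted step decreasing it by at least $\sigma\Delta_t$, together with compactness, a uniform lower bound on the backtracking step from $L_W$-smoothness on a compact interior set (your ``first approach'', the loading map keeping flows strictly positive, is exactly how the paper handles your main obstacle), and strict descent of the residual; the only difference is that the paper argues by contradiction on $\lim_t W_F(\mathbf{x}_t) > W_F(\mathbf{x}^*)$ (iterates then avoid an $\varepsilon$-ball, so $g \ge c_\varepsilon$, $\Delta_t \ge \delta_\varepsilon$ and $W_F \to -\infty$), whereas you go directly through $\Delta_t \to 0 \Rightarrow g(\mathbf{x}_t) \to 0$ and limit points. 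One caveat you share with the paper: your convexity argument keeps only the iterates formed as convex combinations with $F$ uniformly interior, so an accepted Anderson extrapolation could land near the boundary, where the smoothness constant degenerates; the simplest repair is to make membership of the candidate in that uniformly interior convex set part of the acceptance test in step~(3).
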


\begin{proof}
Every accepted step decreases $W_F$ by at least $\sigma \Delta_t$, and
$\Delta_t > 0$ whenever $\mathbf{x}_t \neq \mathbf{x}^*$ because
$\mathbf{r}_t$ is a strict descent direction there.
$W_F$ is bounded below by $W_F(\mathbf{x}^*)$, so
$W_F(\mathbf{x}_t) \to L \geq W_F(\mathbf{x}^*)$.
Suppose $L > W_F(\mathbf{x}^*)$.
Then the iterates remain outside some $\varepsilon$-ball of $\mathbf{x}^*$.
The feasible region is compact, the loading map keeps every active-link flow
strictly positive, and on the resulting compact interior set $W_F$ is
$L_W$-smooth with
$-\nabla W_F(\mathbf{x})\cdot\mathbf{r}(\mathbf{x}) \geq c_\varepsilon > 0$
by strict convexity and compactness.
Standard Armijo analysis then bounds the accepted backtracking step below,
$\alpha_t \geq 2(1-\sigma_{\mathrm{BT}})\,c_\varepsilon / (L_W M^2) > 0$
with $M = \sup \|\mathbf{r}\| < \infty$, so
$\Delta_t \geq \delta_\varepsilon > 0$ uniformly.
Summing the accepted decreases gives
$W_F(\mathbf{x}_T) \leq W_F(\mathbf{x}_0) - T \sigma \delta_\varepsilon
\to -\infty$, contradicting boundedness below.
Hence $L = W_F(\mathbf{x}^*)$, and strict convexity with compactness gives
$\mathbf{x}_t \to \mathbf{x}^*$.
\end{proof}

\begin{theorem}[Catastrophic Anderson failure ruled out]
\label{thm:S8_cycling}
Under A1, the iterate cannot cycle with
$\liminf_t \|\mathbf{r}_t\| > 0$.
The catastrophic Anderson failure mode, in which the history matrix stays
ill-conditioned while the residual does not converge, therefore cannot
arise.
\end{theorem}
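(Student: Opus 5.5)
The plan is to derive Theorem~\ref{thm:S8_cycling} as a corollary of Theorem~\ref{thm:S8_anderson}, together with continuity of the residual map, arguing by contradiction. Suppose the safeguarded iterates $(\mathbf{x}_t)$ satisfy $\liminf_t \|\mathbf{r}_t\| = \rho > 0$. This covers any cycling pattern, whether periodic, quasi-periodic, or a stalled Anderson history with an ill-conditioned $\Delta R_t$. The key point is that the Armijo acceptance rule in step~(3) never lets the Anderson candidate determine the sequence of merit values. Every accepted iterate, Anderson or backtracking, satisfies
\[
W_F(\mathbf{x}_t) - W_F(\mathbf{x}_{t+1}) \;\geq\; \sigma\,\Delta_t \;>\; 0 .
\]
Hence $W_F(\mathbf{x}_t)$ is strictly decreasing whenever $\mathbf{x}_t \neq \mathbf{x}^*$, regardless of how badly conditioned the least-squares problem~\eqref{eq:anderson_ls} is.

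First, I will show that a genuine cycle, meaning $\mathbf{x}_{t+p} = \mathbf{x}_t$ for some $p \geq 1$ with $\mathbf{x}_t \neq \mathbf{x}^*$, is impossible. Strict monotone decrease of $W_F$ over $p$ steps gives $W_F(\mathbf{x}_t) < W_F(\mathbf{x}_t)$, a contradiction. This is the same summation-around-the-cycle argument used for acyclicity in Proposition~\ref{prop:S2_dsp_dag}.

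Second, I will rule out the non-periodic case with a persistent residual. Since $\mathbf{r}(\mathbf{x}^*) = \mathbf{0}$ and $\mathbf{r} = F - I$ is continuous under A1 (the logit loading is smooth in the costs, and the costs are $C^1$), there is an $\varepsilon > 0$ with $\|\mathbf{r}(\mathbf{x})\| < \rho/2$ on the $\varepsilon$-ball around $\mathbf{x}^*$. Then $\|\mathbf{r}_t\| \geq \rho/2$ for all large $t$ forces $\mathbf{x}_t$ to stay outside that ball eventually. This contradicts Theorem~\ref{thm:S8_anderson}, which gives $\mathbf{x}_t \to \mathbf{x}^*$ from any feasible start. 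So $\liminf_t \|\mathbf{r}_t\| = 0$. In fact $\|\mathbf{r}_t\| \to 0$, by continuity and convergence of the iterates.

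The main obstacle is making sure the argument does not secretly depend on properties of the Anderson candidate. Theorem~\ref{thm:S8_anderson} uses a uniform lower bound $\Delta_t \geq \delta_\varepsilon$ that comes only from the backtracking reference step, and that bound holds uniformly once the iterates stay outside an $\varepsilon$-ball. I will check this explicitly by restating it: on $\{\|\mathbf{x} - \mathbf{x}^*\| \geq \varepsilon\}$ intersected with the compact feasible set, the Armijo step bound yields $\Delta_t \geq \delta_\varepsilon$. The accepted step, Anderson or not, then decreases $W_F$ by at least $\sigma\delta_\varepsilon$, so infinitely many such steps would drive $W_F$ to $-\infty$.

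The conditioning of $\Delta R_t$ therefore enters nowhere. An ill-conditioned history can only produce a rejected candidate, which is replaced by the backtracking step. This is exactly the "catastrophic failure mode" the statement excludes. I will close by noting that the conclusion applies to the Armijo-safeguarded variant only, consistent with the caveat stated at the opening of Section~\ref{sec:S8}.
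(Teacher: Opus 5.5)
Your proposal is correct and follows the paper's own route: a positive $\liminf_t \|\mathbf{r}_t\|$ keeps the iterates outside a neighbourhood of $\mathbf{x}^*$, which contradicts the convergence established in Theorem~\ref{thm:S8_anderson}. Your explicit continuity step (choosing $\varepsilon$ so that $\|\mathbf{r}\| < \rho/2$ on the $\varepsilon$-ball) fills in what the paper's one-line proof leaves implicit, while your separate exclusion of exact periodic cycles and your re-check of the Armijo bound are redundant but harmless.
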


\begin{proof}
If $\liminf_t \|\mathbf{r}_t\| = \varepsilon > 0$, the iterates stay
bounded away from $\mathbf{x}^*$, contradicting the convergence of the
safeguarded iteration established in
Theorem~\ref{thm:S8_anderson}.
\end{proof}

Near-convergence ill-conditioning of the history matrix as
$\|\mathbf{r}_t\| \to 0$ is benign.
The right-hand side of the least-squares problem shrinks with the residual,
so the extrapolation weights remain bounded, and Tikhonov regularisation of
the $m \times m$ normal equations handles it in practice.

\paragraph*{Implementation gap}
The production code safeguards on the residual norm rather than on $W_F$,
because evaluating the entropy term of equation~\eqref{eq:S8_fisk} requires
path probabilities.
The Beckmann integral alone cannot substitute for $W_F$: its minimiser is
the Wardrop UE, not the SUE fixed point.
Theorem~\ref{thm:S8_anderson} therefore certifies the Armijo variant
exactly, and the production variant empirically: across the production runs
the residual-based safeguard fired rarely and the accepted steps decreased
the residual monotonically at the reported convergence rates
(Section~\ref{sec:S1}).

\section{Per-destination layout: exactness and complexity reduction}
\label{sec:S9}

This section states and proves the exactness and the reduction factor of
the per-destination forward absorption layout used by the main paper.
Write $z = |D|$ for the number of zones, $n = |\mathcal{N}|$ nodes,
$B$ for the scenario batch, and consider a full OD matrix with
$n_{\mathrm{OD}} = z(z-1)$ pairs.

\begin{proposition}[Per-destination reduction]
\label{prop:S9_layout}
Let $\boldsymbol{\alpha}^{\mathrm{OD}} \in
\mathbb{R}^{B \times n \times n_{\mathrm{OD}}}$ be the per-OD transient
flow tensor and define the per-destination tensor
$\boldsymbol{\alpha} \in \mathbb{R}^{B \times n \times z}$ by summing over
origins,
$\alpha_{b,v,d} = \sum_{o \neq d} \alpha^{\mathrm{OD}}_{b,v,(o,d)}$.
Then:
(i)~the per-destination tensor satisfies the forward absorption recursion
of the main paper, and the per-link flows recovered from the two layouts
are identical;
(ii)~memory and per-hop scatter work are reduced by the factor $z-1$
exactly;
(iii)~at single precision the per-OD layout requires $4Bnz(z-1)$ bytes for
the transient tensor alone, so it is infeasible on a device with
$G_{\mathrm{GPU}}$ bytes of memory once
$z \gtrsim \sqrt{G_{\mathrm{GPU}}/(4Bn)}$, while the per-destination layout
remains feasible up to $z \lesssim G_{\mathrm{GPU}}/(4Bn)$.
\end{proposition}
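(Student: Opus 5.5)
The plan rests on linearity of the forward absorption recursion in its source term. The key structural fact is that the link-choice probability $P(e\mid u,d)$ of equation~\eqref{eq:logit} depends only on the destination $d$ (through $V(\cdot,d)$ and the costs), not on the origin.

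For (i), first write the per-OD recursion. For each pair $(o,d)$ and each $v\neq d$,
\[
\alpha^{\mathrm{OD}}_{b,v,(o,d)} = q^{(b)}_{od}\,\mathbf{1}_{[v=o]} + \sum_{e=(u,v)} P_b(e\mid u,d)\,\alpha^{\mathrm{OD}}_{b,u,(o,d)} .
\]
This is a linear system $(I-\mathbf{P}^d_b)^{\top}\boldsymbol{\alpha}^{\mathrm{OD}}_{\cdot,(o,d)} = q_{od}\mathbf{e}_o$ whose matrix is common to all origins. Summing over $o\neq d$ and exchanging the finite sums gives exactly equation~\eqref{eq:absorption} for $\alpha_{b,v,d}$, with source $\sum_o q_{od}\mathbf{1}_{[v=o]}$.

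Next I need the solution to be unique, so that the summed tensor coincides with the one computed directly. On a DSP DAG this follows from Proposition~\ref{prop:S2_dsp_dag}: the recursion is triangular in topological order. On the full graph I would assume the well-posedness condition stated in the main paper, namely spectral radius below one, so that $I-\mathbf{P}^d$ is invertible. The same argument works iteratively. If both layouts are run for the same number of passes from zero initialisation, induction on the pass index shows the summation identity holds at every pass, so the proof does not even need convergence.

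Link flows then follow by linearity:
\[
f_e = \sum_d\sum_o P(e\mid u,d)\,\alpha^{\mathrm{OD}}_{u,(o,d)} = \sum_d P(e\mid u,d)\,\alpha_{u,d}.
\]

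Part (ii) is counting. The tensor sizes are $Bn\,z(z-1)$ versus $Bnz$, a ratio of exactly $z-1$. Each hop scatters one message per (link, last index), giving $B|\mathcal{E}|z(z-1)$ versus $B|\mathcal{E}|z$, again a ratio of $z-1$.

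For (iii), use 4 bytes per fp32 entry. The per-OD layout needs $4Bnz(z-1)\approx 4Bnz^2$ bytes, which exceeds $G_{\mathrm{GPU}}$ once $z\gtrsim\sqrt{G_{\mathrm{GPU}}/(4Bn)}$. The per-destination layout needs $4Bnz\le G_{\mathrm{GPU}}$, which gives the linear threshold. Both are stated with $\lesssim$ because they concern the transient tensor only.

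I expect the main obstacle to be the uniqueness/well-posedness point in (i) on cyclic graphs. I handle it either by the pass-by-pass induction or by invoking the spectral-radius condition. The rest is bookkeeping.
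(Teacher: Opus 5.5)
Your proposal is correct and follows essentially the same route as the paper: because $P(e\mid u,d)$ does not depend on the origin, the sum over origins commutes through the hop-indexed absorption recursion, the initialisations match, link flows agree by linearity, and (ii)--(iii) reduce to the same entry counting and to solving $4Bnz(z-1) > G_{\mathrm{GPU}}$ for $z$. The paper argues directly on the hop-by-hop recursion, which is your pass-by-pass induction, so the uniqueness and spectral-radius discussion you flag as the main obstacle is not actually needed.
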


\begin{proof}
(i)~Fix a scenario $b$ and destination $d$ and sum the per-OD recursion
over origins $o \neq d$.
The link-choice probability $P(e \mid u, d)$ of the main paper depends on
the destination but not on the origin, so it commutes with the sum over
origins:
\[
\sum_{o \neq d} \alpha^{\mathrm{OD}}_{b,v,(o,d),h+1}
= \sum_{e=(u,v)} P(e \mid u,d) \sum_{o \neq d}
  \alpha^{\mathrm{OD}}_{b,u,(o,d),h}
= \sum_{e=(u,v)} P(e \mid u,d)\, \alpha_{b,u,d,h}.
\]
The summed per-OD recursion is exactly the per-destination recursion, the
initialisations agree because the only nonzero per-OD entry for destination
$d$ at origin $o$ is the pair $(o,d)$ itself, and the per-link flow
$x_e = \sum_{b,d} \alpha_{b,u,d} P(e \mid u,d)$ recovered from either
tensor coincides by the same linearity.
No approximation is introduced at any step.
(ii)~The tensors have $Bnz(z-1)$ and $Bnz$ entries, and each absorption hop
performs one scatter-reduce per column, so both ratios equal $z-1$.
(iii)~Setting $4Bnz(z-1) > G_{\mathrm{GPU}}$ and solving for $z$ gives the
stated threshold.
\end{proof}

For a metropolitan network of the scale of Sydney
($n = 33{,}113$, $z = 3{,}264$) at $B = 1$ and single precision, the per-OD
tensor would hold about $3.5 \times 10^{11}$ entries (roughly 1.4\,TB),
about eighteen times the memory of an 80\,GB accelerator, while the
per-destination tensor holds $1.08 \times 10^{8}$ entries (roughly
432\,MB).
The Bellman value function is destination-indexed in both layouts, an
$n \times z$ array, and therefore cancels from the reduction factor.

\bibliographystyle{elsarticle-num}
\bibliography{mukara5}

\end{document}